\title{Singly Exponential Translation of Alternating Weak \buchi Automata to Unambiguous \buchi Automata} 
\titlerunning{Singly exponential translation of AWAs to UBAs} 
\author{Yong Li}{University of Liverpool, UK}{liyong@liverpool.ac.uk}{https://orcid.org/0000-0002-7301-9234}{}
\author{Sven Schewe}{University of Liverpool, UK}{svens@liverpool.ac.uk}{https://orcid.org/0000-0002-9093-9518}{}
\author{Moshe Y. Vardi}{Rice University, USA}{vardi@cs.rice.edu}{https://orcid.org/0000-0002-0661-5773}{}
\authorrunning{Y. Li, S. Schewe, M. Vardi} 
\keywords{B\"uchi automata, unambiguous automata, alternation, weak, disambiguation} 
\newcommand{\A}{\mathcal{A}}
\newcommand{\B}{\mathcal{B}}
\newcommand{\U}{\mathcal{U}}
\newcommand{\dual}[1]{\widehat{#1}}
\newcommand{\R}{\mathtt{R}}
\newcommand{\N}{\mathcal{N}}
\newcommand{\po}{\mathcal{P}}
\newcommand{\G}{\mathcal{G}}
\newcommand{\bigO}{\mathcal{O}}
\newcommand{\acctrue}{\mathsf{tt}}
\newcommand{\accfalse}{\mathsf{ff}}
\newcommand{\wordletter}[2]{#1{[#2]}}
\newcommand{\vertex}[2]{\langle {#1}, {#2} \rangle}
\newcommand{\buchi}{B\"uchi\xspace}
\newcommand{\alphabet}{\Sigma}
\newcommand{\posbool}[1]{\mathcal{B}^+(#1)}
\newcommand{\infwords}{\alphabet^\omega}
\newcommand{\langsymb}[0]{\mathcal{L}}
\newcommand{\lang}[1]{\langsymb(#1)}
\newcommand{\states}{Q}
\newcommand{\trans}{\delta}
\newcommand{\inits}{I}
\newcommand{\run}{\rho}
\newcommand{\naturals}{\mathbb{N}}
\newcommand{\setnocond}[1]{\{#1\}}
\newcommand{\setcond}[2]{\{\, #1 \mid #2 \,\}}
\newcommand{\acc}{F}
\newcommand{\infset}{\textsf{inf}}
\newcommand{\tpo}{\ensuremath{\mathsf{tpo}}}
\newcommand{\nxt}{\ensuremath{\mathsf{next}}}
\begin{document}

\maketitle

\begin{abstract}
We introduce a method for translating an alternating weak \buchi automaton (AWA), which corresponds to a Linear Dynamic Logic (LDL) formula, to an unambiguous \buchi automaton (UBA).
Our translations generalise constructions for Linear Temporal Logic (LTL), a less expressive specification language than LDL.
In classical constructions, LTL formulas are first translated to alternating \emph{very weak} automata (AVAs)---automata that have only singleton strongly connected components (SCCs);
the AVAs are then handled by efficient disambiguation procedures.
However, general AWAs can have larger SCCs, which complicates disambiguation.
Currently, the only available disambiguation procedure has to go through an intermediate construction of nondeterministic B\"uchi automata (NBAs), which would incur an exponential blow-up of its own.
We introduce a translation from \emph{general} AWAs to UBAs with a \emph{singly} exponential blow-up, which also immediately provides a singly exponential translation from LDL to UBAs.
Interestingly, the complexity of our translation is \emph{smaller} than the best known disambiguation algorithm for NBAs (broadly $(0.53n)^n$ vs.\ $(0.76n)^n$),
while the input of our construction can be exponentially more succinct.
\end{abstract}

\section{Introduction}
Automata over infinite words were first introduced by B\"uchi~\cite{Buc62}.
The automata used by B\"uchi (thus called \emph{B\"uchi automata}) accept an infinite word if they have a run over the word that visits accepting states infinitely often.
Nondeterministic \buchi automata (NBAs) are nowadays recognized as a standard tool for model checking transition systems against temporal specification languages like Linear Temporal Logic (LTL)~\cite{DBLP:conf/lics/VardiW86,DBLP:journals/tse/Holzmann97,DBLP:conf/cav/GastinO01,DBLP:books/daglib/0020348}.

NBAs belong to a larger class of automata over infinite words, also known as $\omega$-automata.
Translations between different types of $\omega$-automata play a central role in automata theory, and many of them have gained practical importance, too.
For example, researchers have started to pay attention to a kind of automata called \emph{alternating automata}~\cite{DBLP:conf/litp/MullerS84,DBLP:journals/tcs/MiyanoH84} in the 80s. Alternating automata not only have existential, but also \emph{universal} branching.
In alternating automata, the transition function no longer maps a state and a letter to a set of states, but to a positive Boolean formula over states.
An alternating \buchi automaton accepts an infinite word if there is a run graph over the word, in which all traces visit accepting states infinitely often.
Every NBA can be seen as a special type of alternating \buchi automaton (ABA), while the translation from ABAs to NBAs may incur an exponential blow-up in the number of states~\cite{DBLP:journals/tcs/MiyanoH84}.
Indeed, ABAs can be exponentially more succinct than their counterpart NBAs~\cite{DBLP:conf/icalp/BokerKR10}.
Apart from their succinctness, another reason why alternating automata have become popular in our community is their tight connection to specification logics.
There is a straight forward translation from Linear Dynamic Logic (LDL)~\cite{DBLP:journals/corr/Vardi11,DBLP:conf/ijcai/GiacomoV13} to \emph{alternating weak \buchi automata} (AWAs), both recognizing exactly the $\omega$-regular languages.
AWAs are a special type of ABAs in which every strongly connected component (SCC) contains either only accepting states or only rejecting states.
(AWAs have also been applied to the complementation of \buchi automata~\cite{DBLP:journals/tocl/KupfermanV01}.)
Further, there is a one-to-one mapping~\cite{DBLP:conf/cav/GastinO01,DBLP:conf/ictac/BlahoudekMS19,DBLP:conf/fossacs/BokerLS22} between LTL and \emph{very weak} alternating \buchi automata (AVAs)~\cite{RohdePhD97}---special AWAs where every SCC has only one state.

Automata over infinite words with different branching mechanisms all have their place in building the foundation of the automata-theoretic model checking.
This paper adds another chapter to the success story of efficient automata transformations:
we show how to efficiently translate AWAs to unambiguous B\"uchi automata (UBAs)~\cite{DBLP:journals/tcs/CartonM03}, and thus also the logics that tractably reduce to AWAs, e.g., LDL.
UBAs are a type of NBAs that have at most one accepting run for each word and have found applications in probabilistic verification~\cite{DBLP:conf/cav/BaierK0K0W16}\footnote{We note that specialized model checking algorithm for Markov chains against AWAs/LDL, without constructing UBAs, has been proposed in \cite{DBLP:conf/cav/BustanRV04} without implementations.
Nonetheless, our translation can potentially be used as a third party tool that constructs UBA from an AWA/LDL formula for \textsf{PRISM} model checker~\cite{DBLP:conf/cav/KwiatkowskaNP11} without changing the underlying model checking algorithm~\cite{DBLP:conf/cav/BaierK0K0W16}.}.

Our approach can be viewed as a generalization of earlier work on the disambiguation of AVAs~\cite{DBLP:conf/tacas/BenediktLW13,DBLP:journals/fmsd/JantschMBK21}.
The property of the very weakness has proven useful for disambiguation: to obtain an unambiguous generalized B\"uchi automaton (UGBA) from an AVA, it essentially suffices to use the nondeterministic power of the automaton to guess, in every step, the precise set of states from which the automaton accepts.
There is only one correct guess (which provides unambiguity), and discharging the correctness of these guesses is straight forward.
AVAs with $n$ states can therefore be disambiguated to UGBAs with $2^n$ states and $n$ accepting sets, and thus to UBAs with $n2^n$ states.

Unfortunately, this approach does not extend easily to the disambiguation of AWAs: while there would still be exactly one correct guess, the straight-forward way to discharging its correctness would involve a breakpoint construction~\cite{DBLP:journals/tcs/MiyanoH84}, which is \emph{not} unambiguous.

The technical contribution of this paper is to replace these breakpoint constructions by \emph{total preorders}, and showing that there is a \emph{unique} correct way to choose these orders.
We provide two different reductions, one closer to the underpinning principles---and thus better for a classroom (cf.\ Section~\ref{sec:finite-repre})---and a more efficient approach (cf.\ Section~\ref{sec:improved-construction}).

Given that we track total preorders, the worst-case complexity arises when all, or almost all, states are in the same component.
To be more precise, if $\tpo(n)$ denotes the number of total preorders on sets with $n$ states, then our construction provides UBAs of size $O\big(\tpo(n)\big)$.
As $\tpo(n) \approx \frac{n!}{2(\ln 2)^{n+1}}$ \cite{BARTHELEMY1980311}, we have that $\lim_{n \rightarrow \infty} \frac{\sqrt[n]{\tpo(n)}}{n} = \frac{1}{e\ln 2} \approx 0.53$, which is a better bound than the best known bound for B\"uchi disambiguation \cite{DBLP:conf/atva/KarmarkarJC13} (and complementation \cite{DBLP:conf/stacs/Schewe09}), where the latter number is $\approx 0.76$.

While it is not surprising that a direct construction of UBAs for AWAs is superior to a construction that goes through nondeterminization (and thus incurs two exponential blow-ups on the way), we did not initially expect a construction that leads to a smaller increase in the size when starting from an AWA compared to starting from an NBA, as AWAs can be exponentially more succinct than NBAs, but not vice versa (See \cite{DBLP:journals/tocl/KupfermanV01} for a quadratic transformation).

As a final test for the quality of our construction, we briefly discuss how it behaves on AVAs, for which efficient disambiguation is available.
We show that the complexity of our construction can be improved to $n2^n$ when the input is an AVA, leading to the same construction as the classic disambiguation construction for LTL/AVAs \cite{DBLP:conf/tacas/BenediktLW13,DBLP:journals/fmsd/JantschMBK21} (cf.\ Section~\ref{sec:conclusion}).
We also discuss how to adjust it so that it can produce the same transition based UGBA in this case, too.
The greater generality we obtain comes therefore at no additional cost.
\medskip

\noindent\textbf{Related work. }
Disambiguation of AVAs from LTL specifications have been studied in~\cite{DBLP:conf/tacas/BenediktLW13,DBLP:journals/fmsd/JantschMBK21}.
Our disambiguation algorithm can be seen as a more general form of them.
The disambiguation of NBAs was considered in~\cite{DBLP:conf/icalp/KahlerW08}, which has a blow-up of $\bigO((3n)^n)$;
the complexity has been later improved to $\bigO(n \cdot (0.76n)^n)$ in~\cite{DBLP:conf/atva/KarmarkarJC13}.
Our construction can also be used for disambiguating NBAs, by going through an intermediate construction of AWAs from NBAs;
however, the intermediate procedure itself can incur a quadratic blow-up of states~\cite{DBLP:journals/fmsd/JantschMBK21}.
Nonetheless, if the input is an AWA, our construction improves the current best known approach exponentially by avoiding the alternation removal operation for AWAs~\cite{DBLP:journals/tcs/MiyanoH84,DBLP:conf/icalp/BokerKR10}.

\section{Preliminaries}
For a given set $X$, we denote by $\posbool{X}$ the set of \emph{positive Boolean} formulas over $X$.
These are the formulas obtained from elements of $X$ by only using $\land$ and $\lor$, where we also allow $\acctrue$ and $\accfalse$.
We use $\acctrue$ and $\accfalse$ to represent tautology and contradiction, respectively.
For a set $Y \subseteq X$, we say $Y$ satisfies a formula $\theta \in \posbool{X}$, denoted as $Y \models \theta$, if the Boolean formula $\theta$ is evaluated to $\acctrue$ when we assign $\acctrue$ to members of $Y$ and $\accfalse$ to members of $X\setminus Y$.
For an infinite sequence $\rho$, we denote by $\wordletter{\rho}{i}$ the $i$-th element in $\rho$ for some $i \geq 0$;
for $i \in \naturals$, we denote by $\wordletter{\rho}{i\cdots} = \wordletter{\rho}{i} \wordletter{\rho}{i+1}\cdots$ the suffix of $\rho$ from its $i$-th letter.

An \emph{alternating} \buchi automaton (ABA) $\A$ is a tuple $(\alphabet, \states, \iota, \trans, \acc)$ where $\alphabet$ is a finite alphabet, $\states$ is a finite set of states, $\iota \in \states$ is the initial state, $\trans : \states \times \alphabet \rightarrow \posbool{\states}$ is the transition function, and $\acc \subseteq \states$ is the set of accepting states.
ABAs allow both non-deterministic and universal transitions.
The disjunctions in transition formulas model
the non-deterministic choices, while conjunctions model the universal choices.
The existence of both nondeterministic and universal choices can make ABAs exponentially more succinct than NBAs~\cite{DBLP:conf/icalp/BokerKR10}.
We assume w.l.o.g.\ that every ABA is \emph{complete}, in the sense that there is a next state for each $s \in \states$ and $\sigma \in \alphabet$.
Every ABA can be made complete as follows.
Fix a state $s \in \states$ and a letter $\sigma \in \alphabet$.
If $\trans(s, \sigma) = \accfalse$, we can add a sink rejecting state $\bot$, and set $\trans(s, \sigma) = \bot$ and $\trans(\bot, \sigma) = \bot$ for every $\sigma \in \alphabet$;
If $\trans(s, \sigma) = \acctrue$, we can add a sink accepting state $\top$, and set $\trans(\top, \sigma) = \top$ for every $\sigma \in \alphabet$.
For a state $s \in \states$, we denote by $\A^s$ the ABA obtained from $\A$ by setting the initial state to $s$.

The \emph{underlying graph} $\G_{\A}$ of an ABA $\A$ is a graph $\langle \states, E\rangle$, where the set of vertices is the set $\states$ of states in $\A$ and $(q, q') \in E$ if $q'$ appears in the formula $\trans(q, \sigma)$ for some $\sigma \in \alphabet$.
We call a set $C \subseteq \states$ a \emph{strongly connected component} (SCC) of $\A$ if, for every pair of states $q, q' \in C$, $q$ and $q'$  can reach each other in $\G_{\A}$.

A \emph{nondeterministic \buchi automaton} (NBA) $\A$ is an ABA where $\posbool{\states}$ only contains the $\lor$ operator; we also allow \emph{multiple} initial states for NBAs.
We usually denote the transition function $\trans$ of an NBA $\A$ as a function $\trans: \states \times \alphabet \rightarrow 2^{\states}$ and the set of initial states as $\inits$.
Let $w = \wordletter{w}{0} \wordletter{w}{1} \cdots \in \infwords$ be an (infinite) \emph{word} over $\alphabet$.
A \emph{run} of the NBA $\A$ over $w$ is a state sequence $\rho = q_0 q_1 \cdots \in \states^{\omega}$ such that $q_0 \in \inits$ and, for all $i \in \naturals$, we have that $q_{i+1} \in \trans(q_i, \wordletter{w}{i})$.
We denote by $\infset(\rho)$ the set of states that occur in $\rho$ infinitely often.
A run $\rho$ of the NBA $\A$ is \emph{accepting} if $\infset(\rho) \cap \acc \neq \emptyset$.
An NBA $\A$ accepts a word $w$ if there is an accepting run $\run$ of $\A$ over $w$.
An NBA $\A$ is said to be \emph{unambiguous} (abbreviated as UBA)~\cite{DBLP:journals/tcs/CartonM03} if $\A$ has at most \emph{one} accepting run for every word.

Since ABA have universal branching (or conjunctions in $\trans$), a run of an ABA is no longer an infinite sequence of states;
instead, a run of an ABA $\A$ over $w$ is a run directed acyclic graph (run DAG) $\G_w = (V, E)$ formally defined below:
\begin{itemize}
    \item $V \subseteq \states \times \naturals$ where $\vertex{\iota}{ 0} \in V$.
    \item $E \subseteq \bigcup_{\ell > 0} (\states \times \setnocond{\ell}) \times (\states \times \setnocond{\ell + 1})$ where, for every vertex $\vertex{q}{ \ell} \in V, \ell \geq 0$, we have that $\setcond{q' \in \states}{(\vertex{q}{\ell}, \vertex{q'}{\ell+1}) \in E} \models \trans(q, \wordletter{w}{\ell})$.
\end{itemize}

A vertex $\vertex{q}{\ell}$ is said to be \emph{accepting} if $q \in \acc$.
An infinite sequence $\rho = \vertex{q_0}{0} \vertex{q_1}{1}\cdots$ of vertices is called an \emph{$\omega$-branch} of $\G_w$ if $q_0 = \iota$ and for all $\ell \in \naturals$, we have $(\vertex{q_{\ell}}{\ell}, \vertex{q_{\ell+1}}{\ell + 1}) \in E$.
We also say the fragment $\vertex{q_i}{i}\vertex{q_{i+1}}{i+1}\cdots$ of $\rho$ is an \emph{$\omega$-branch} from $\vertex{q_i}{i}$.
We say a run DAG $\G_w$ is \emph{accepting} if \emph{all} its $\omega$-branches visit accepting vertices infinitely often.
An $\omega$-word $w$ is \emph{accepting} if there is an accepting run DAG of $\A$ over $w$.

Let $\A$ be an ABA.
We denote by $\lang{\A}$ the set of words accepted by $\A$.

It is known that both NBAs and ABAs recognise exactly the $\omega$-regular languages.
ABAs can be transformed into language-equivalent NBAs in exponential time~\cite{DBLP:journals/tcs/MiyanoH84}.
In this work, we consider a special type of ABAs, called \emph{alternating weak B\"uchi automata} (AWAs) where, for every SCC $C$ of an AWA $\A = (\alphabet, \states, \iota, \trans, \acc)$, we have either $C \subseteq \acc$ or $C \cap \acc = \emptyset$.
We note that different choices of equivalent transition formulas, e.g., $\trans(p, \sigma) = q_1$ and $\trans(p, \sigma) = q_1 \land (q_1 \lor q_2)$, will result in different SCCs.
However, as long as the input ABA is weak\footnote{To make ABAs as weak as possible, one solution would be to allow for minimal satisfying assignments to the transition formulas, which is well defined and results in minimal possible SCCs.}, our proposed translation still applies.

One can transform an ABA to its equivalent AWA with only quadratic blow-up of the number of states~\cite{DBLP:journals/tocl/KupfermanV01}.
A nice property of an AWA $\A$ is that we can easily define its dual AWA $\dual{\A} = (\alphabet, \states, \iota, \dual{\trans}, \dual{\acc})$, which has the same statespace and the same underlying graph as $\A$, as follows:
for a state $q \in \states$ and $a \in \alphabet$, $\dual{\trans}(q, a)$ is defined from $\trans(q, a)$ by exchanging the occurrences of $\accfalse$ and $\acctrue$ and the occurrences of $\land$ and $\lor$, and $\dual{\acc} = \states \setminus \acc$.
It follows that:
\begin{lemma}[\hspace*{-1mm}\cite{DBLP:journals/tcs/MullerSS92}]
    Let $\A$ be an AWA and $\dual{\A}$ its dual AWA. For every state $q \in \states$, we have $\lang{\A^{q}} = \infwords \setminus \lang{\dual{\A}^{q}}$.
\end{lemma}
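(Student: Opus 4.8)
The plan is to recast membership in $\lang{\A^q}$ and $\lang{\dual{\A}^q}$ as winning conditions in a two-player acceptance game, and then to combine the determinacy of that game with the purely syntactic duality between $\trans$ and $\dual{\trans}$. For a fixed word $w \in \infwords$, I would define the game $\Gamma(\A^q, w)$ played by \emph{Eve} (who resolves the disjunctions, i.e.\ the nondeterministic choices) and \emph{Adam} (who resolves the conjunctions, i.e.\ picks a branch). A position is a pair $\vertex{p}{\ell}$; from $\vertex{p}{\ell}$, Eve first picks a set $S \subseteq \states$ with $S \models \trans(p, \wordletter{w}{\ell})$, then Adam picks some $p' \in S$, and the play proceeds to $\vertex{p'}{\ell+1}$ (completeness of $\A$ guarantees a legal move always exists; the sinks $\top$ and $\bot$ added during completion simply loop). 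A play is an $\omega$-branch $\vertex{q}{0}\vertex{p_1}{1}\vertex{p_2}{2}\cdots$, and Eve wins it iff it visits accepting vertices infinitely often. Since $\A$ is weak, along any branch the sequence of SCCs traversed is a path in the condensation of $\G_{\A}$, so it eventually stays inside one SCC $C$; as $C \subseteq \acc$ or $C \cap \acc = \emptyset$, Eve's condition is equivalent to ``the branch eventually remains in an accepting SCC''. This is a weak condition, hence Borel, so $\Gamma(\A^q, w)$ is \textbf{determined}, and winning strategies can moreover be taken memoryless (positional in $\vertex{p}{\ell}$).

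The next step is to show $w \in \lang{\A^q}$ iff Eve wins $\Gamma(\A^q, w)$ from $\vertex{q}{0}$. For the forward direction, an accepting run DAG $\G_w$ yields an Eve strategy: at each vertex $\vertex{p}{\ell}$ of $\G_w$, play the set of its $\G_w$-successors; every consistent play is then an $\omega$-branch of $\G_w$ and hence accepting. For the converse, a memoryless winning strategy assigns to each reachable $\vertex{p}{\ell}$ a set $S_{p,\ell} \models \trans(p, \wordletter{w}{\ell})$; the DAG whose vertices are the positions reachable under the strategy and whose edges are all $(\vertex{p}{\ell}, \vertex{p'}{\ell+1})$ with $p' \in S_{p,\ell}$ is a legitimate run DAG, and each of its $\omega$-branches is a play consistent with the strategy, hence accepting, so it witnesses $w \in \lang{\A^q}$. (A non-memoryless strategy could be handled instead by taking, at each vertex, the union of the sets it chooses along the various histories reaching that vertex, using monotonicity of positive Boolean formulas; memoryless determinacy makes this unnecessary.)

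Then I would relate the dual automaton to the role-swapped game. View a single round at $\vertex{p}{\ell}$ as a game on the parse structure of $\theta := \trans(p, \wordletter{w}{\ell})$: Eve owns the $\lor$-nodes, Adam owns the $\land$-nodes, a leaf $p'$ routes to $\vertex{p'}{\ell+1}$, and $\acctrue$, $\accfalse$ are immediate wins for Eve, Adam respectively. By definition $\dual{\trans}(p, \wordletter{w}{\ell})$ is $\theta$ with $\land \leftrightarrow \lor$ and $\acctrue \leftrightarrow \accfalse$ swapped, so one round of $\Gamma(\dual{\A}^q, w)$ is one round of $\Gamma(\A^q, w)$ with the owners of all nodes exchanged; iterating over all rounds, $\Gamma(\dual{\A}^q, w)$ is precisely $\Gamma(\A^q, w)$ with the roles of Eve and Adam interchanged. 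Finally, $\dual{\acc} = \states \setminus \acc$ together with weakness (a branch stabilises in one SCC, which lies in $\acc$ or in $\dual{\acc}$) shows that a branch meets $\dual{\acc}$ infinitely often iff it meets $\acc$ only finitely often, i.e.\ the $\dual{\A}$-winning condition is the complement of the $\A$-winning condition along every branch. Hence Eve wins $\Gamma(\dual{\A}^q, w)$ iff Adam wins $\Gamma(\A^q, w)$.

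Assembling the chain: for every $w \in \infwords$, $w \in \lang{\A^q}$ iff Eve wins $\Gamma(\A^q, w)$; by determinacy this fails iff Adam wins $\Gamma(\A^q, w)$, which holds iff Eve wins $\Gamma(\dual{\A}^q, w)$, i.e.\ iff $w \in \lang{\dual{\A}^q}$. Therefore $\lang{\A^q} = \infwords \setminus \lang{\dual{\A}^q}$. I expect the main obstacle to be the faithful correspondence between accepting run DAGs and winning strategies in the second step---in particular turning a winning strategy into a run DAG all of whose branches are accepting (where memoryless strategies, or monotonicity of positive formulas, and care with the $\top/\bot$ sinks are needed); invoking determinacy is routine, since weakness keeps the condition low in the Borel hierarchy and weak games are in fact positionally determined by an elementary argument.
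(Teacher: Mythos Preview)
The paper does not give its own proof of this lemma: it is stated with a citation to Muller, Saoudi, and Schupp and then used as a black box. There is therefore nothing in the paper to compare your argument against.

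Your game-theoretic argument is the standard route to this duality result and is correct in outline. A couple of minor points worth tightening. First, you oscillate between two formulations of a round: ``Eve picks a set $S \models \trans(p,\wordletter{w}{\ell})$, Adam picks $p' \in S$'' and the parse-tree game on $\trans(p,\wordletter{w}{\ell})$ with Eve at $\lor$-nodes and Adam at $\land$-nodes. These are equivalent (Eve can force the leaf into $T$ in the parse-tree game iff $T \models \theta$, and dually Adam can force the leaf into $T'$ iff $T' \models \dual{\theta}$), but the role-swap under dualisation is only transparent in the parse-tree view, so it is cleanest to adopt that formulation from the start rather than introducing the set-based variant first. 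Second, your appeal to weakness to argue that the $\dual{\A}$-winning condition is the complement of the $\A$-winning condition is exactly right, and in fact this is the only place weakness is genuinely needed: for general ABAs the dual would have a co-B\"uchi condition, not a B\"uchi one, so the complemented automaton would leave the class. Positional determinacy of weak (indeed parity) games handles the strategy-to-run-DAG direction cleanly, as you note; the monotonicity alternative you mention would also work and avoids invoking positional determinacy.
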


In the remainder of the paper, we call a state of an NBA a \emph{macrostate} and a run of an NBA a \emph{macrorun} in order to distinguish them from those of ABA.

\section{From AWAs to UBAs}

In this section, we will present a construction of UBA $\B_u$ from an AWA $\A$ such that $\lang{\B_u} = \lang{\A}$.
We will first introduce the construction of an NBA from an AWA given in \cite{DBLP:conf/cav/BustanRV04} and show that this construction does \emph{not} necessarily yield a UBA (Section~\ref{ssec:awa-nba}).
Nonetheless, we extract the essence of the construction and show that we can associate a \emph{unique} sequence to each word (Section~\ref{sec:unique-sequence}).

We then enrich this unique sequence with additional, similarly unique, information, which we subsequently abstract into the essence of a unique accepting macrorun of $\B_u$.
Developing this into a UBA whose macrorun can be uniquely mapped to the sequence (Section~\ref{sec:finite-repre}) is then just a simple technical exercise.

\subsection{From AWAs to NBAs}
\label{ssec:awa-nba}

As shown in~\cite{DBLP:journals/tcs/MiyanoH84}, we can obtain an equivalent NBA $\N(\A)$ from an ABA $\A$ with an exponential blow-up of states, which is widely known as the \emph{breakpoint construction}.
In \cite{DBLP:conf/cav/BustanRV04}, the authors define a different construction of NBAs $\B$ from AWAs $\A$, which can be seen as a combination of the NBAs $\N(\A)$ and $\N(\dual{\A})$.
Below we will first introduce the construction in \cite{DBLP:conf/cav/BustanRV04} and extract its essence as a unique sequence of sets of states for each word.

The macrostate of $\B$ is encoded as a \emph{consistent} tuple $(Q_1, Q_2, Q_3, Q_4)$ such that $\states_2 = \states \setminus \states_1, \states_3 \subseteq \states_1 \setminus \acc$, and $\states_4 \subseteq \states_2 \setminus \dual{\acc}$.
The formal translation is defined as follows.
\begin{definition}[\hspace*{-1mm}\cite{DBLP:conf/cav/BustanRV04}]\label{def:cav-construction}
    Let $\A = (\alphabet, \states,\iota, \trans, \acc)$ be an AWA. We define an NBA $\B = (\alphabet, Q_{\B}, \inits_{\B}, \trans_{\B}, \acc_{\B})$ where
    \begin{itemize}
        \item $Q_{\B}$ is the set of consistent tuples over $2^{Q} \times 2^{Q} \times 2^{Q} \times 2^{Q}$.
        \item $I_{\B} = \setcond{(Q_1, Q_2, Q_3, Q_4)\in Q_{\B}}{ \iota \in Q_1}$\footnote{$I_{\B}$ is not present in \cite{DBLP:conf/cav/BustanRV04} and we added it for the completeness of the definition.},
        \item Let $(Q_1, Q_2, Q_3, Q_4)$ be a macrostate in $Q_{\B}$ and $\sigma \in \alphabet$.

        Then $(Q'_1, Q'_2, Q'_3, Q'_4) \in \trans_{\B}((Q_1, Q_2, Q_3, Q_4), \sigma)$ if $Q'_1 \models \land_{s \in Q_1} \trans(s, \sigma)$ and $Q'_2 \models \land_{s \in Q_2} \dual{\trans}(s, \sigma)$ and either
        \begin{itemize}
            \item $Q_3 = Q_4 = \emptyset, Q'_3 = Q'_1\setminus \acc$ and $Q'_4 = Q'_2\setminus\dual{\acc}$,
            \item $Q_3 \neq \emptyset$ or $\states_4 \neq \emptyset$, there exists $Y_3\subseteq Q'_1$ such that $Y_3 \models \land_{s\in Q_3} \trans(s, \sigma)$ and $Q'_3 = Y_3\setminus\acc$, and there exists $Y_4\subseteq Q'_2$ such that $Y_4 \models \land_{s\in Q_4} \dual{\trans}(s, \sigma)$ and $Q'_4 = Y_4\setminus\dual{\acc}$.
        \end{itemize}
        \item $\acc_{\B} = \setcond{(Q_1, Q_2, Q_3, Q_4) \in Q_{\B}}{Q_3 = Q_4 = \emptyset}$.
    \end{itemize}
\end{definition}
Intuitively, the resulting NBA performs two breakpoint constructions:
one breakpoint construction macrostate $(Q_1, Q_3)$ for $\A$ and the other breakpoint construction macrostate $(Q_2, Q_4)$ for $\dual{\A}$.
Let $w \in \alphabet$.
The tuple $(Q_1, Q_3)$ in the construction uses $Q_1$ to keep track of the reachable states of $\A$ in a run DAG $\G_w$ over $w$ and exploits the set $Q_3$ to check whether all $\omega$-branches end in accepting SCCs.
If all $\omega$-branches in $Q_3$ have visited accepting vertices, $Q_3$ will fall empty, as $Q_3$ only contains non-accepting states.
Once $Q_3$ becomes empty, we reset the set with $Q'_3 = Q'_1\setminus\acc$ since we need to also check the $\omega$-branches that newly appear in $Q_1$.
If $Q_3$ becomes empty for infinitely many times, we know that every $\omega$-branch in $\G_w$ is accepting, i.e., all $\omega$-branches visit accepting vertices infinitely often.
Hence $w$ is accepted by $\A$ since there is an accepting run DAG from $\A^{\iota}$.
We can similarly reason about the breakpoint construction for $\dual{\A}$.

Besides that $\lang{\B} = \lang{\A}$, Bustan, Rubin, and Vardi \cite{DBLP:conf/cav/BustanRV04} have also shown the following:
\begin{lemma}[\hspace{-0.1mm}\cite{DBLP:conf/cav/BustanRV04}]\label{lemma:cav-macrostates}
Let $\B$ be the NBA constructed as in Definition~\ref{def:cav-construction}. Then
\begin{itemize}
    \item Let $S \subseteq \states$, we have that
\[  \lang{\B^{(S, \states\setminus S, Q_3, Q_4)}} = \bigcap_{s\in S} \lang{\A^s} \cap \bigcap_{s\in  \states\setminus S}\lang{\dual{\A}^s} \]
where $Q_3 \subseteq S$ and $Q_4\subseteq Q\setminus S$;
    \item Let $(Q_1, Q_2, Q_3, Q_4)$ and $(Q'_1, Q'_2, Q'_3, Q'_4)$ be two macrostates of $\B$, we have that
    \begin{itemize}
        \item $ \lang{\B^{(Q_1, Q_2, Q_3, Q_4)}} \cap \lang{\B^{(Q'_1, Q'_2, Q'_3, Q'_4)}} = \emptyset$ if $Q_1 \neq Q'_1$, and
        \item $ \lang{\B^{(Q_1, Q_2, Q_3, Q_4)}} = \lang{\B^{(Q'_1, Q'_2, Q'_3, Q'_4)}} $ if $Q_1 = Q'_1$.
    \end{itemize}
\end{itemize}
\end{lemma}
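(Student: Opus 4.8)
The plan is to establish both bullet points from the semantics of the run DAG and the breakpoint construction, treating the two ``sides'' ($Q_1,Q_3$ tracking $\A$ and $Q_2,Q_4$ tracking $\dual{\A}$) by the same argument applied to $\A$ and to $\dual{\A}$ respectively. First I would recall the standard correctness of a single breakpoint construction: for a set $S \subseteq \states$, a macrorun of $\N(\A)$ starting in $(S, S\setminus\acc)$ over $w$ visits accepting macrostates (those with empty second component) infinitely often if and only if the canonical run DAG of $\A$ over $w$ whose level-$0$ vertices are exactly $S$ has all of its $\omega$-branches visiting $\acc$ infinitely often, i.e.\ if and only if $w \in \bigcap_{s\in S}\lang{\A^s}$; moreover this canonical run DAG is unique because $Q'_1$ is forced to be the \emph{minimal} set satisfying $\bigwedge_{s\in Q_1}\trans(s,\sigma)$ along the accepting macrorun (any superset only adds branches and cannot help acceptance). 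I would then observe that in $\B$ the first components $Q_1$ and $Q_2$ evolve \emph{deterministically} (the transition insists $Q_1' \models \bigwedge_{s\in Q_1}\trans(s,\sigma)$, and for the product to have any accepting macrorun one must take the minimal such $Q_1'$, similarly for $Q_2'$), so the sequence $Q_1^{(0)}Q_1^{(1)}\cdots$ is a function of $w$ and of $Q_1^{(0)}$ alone, and likewise for $Q_2$.

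For the first bullet, with $Q_1^{(0)}=S$ and $Q_2^{(0)}=\states\setminus S$: an accepting macrorun of $\B^{(S,\states\setminus S,Q_3,Q_4)}$ must satisfy the acceptance condition infinitely often, which by $\acc_{\B}$ means \emph{both} $Q_3$ and $Q_4$ are simultaneously empty infinitely often; but since each coordinate only shrinks between resets and is reset to $Q_1'\setminus\acc$ (resp.\ $Q_2'\setminus\dual{\acc}$) exactly when \emph{both} are empty, infinitely many joint emptyings force infinitely many emptyings of each coordinate separately. Applying the single-breakpoint correctness to the $(Q_1,Q_3)$ side gives $w\in\bigcap_{s\in S}\lang{\A^s}$ and to the $(Q_2,Q_4)$ side gives $w\in\bigcap_{s\in\states\setminus S}\lang{\dual{\A}^s}$; conversely, if $w$ lies in both intersections, the unique minimal run DAGs on each side have all branches accepting, so the induced breakpoint macroruns each empty their breakpoint set infinitely often, and one synchronises them into a single accepting macrorun of $\B$ by resetting both coordinates only at common emptying points (a standard interleaving/reset argument: waiting for the slower side to also empty does not break the argument since each side keeps emptying). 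The choice $Q_3\subseteq S$, $Q_4\subseteq Q\setminus S$ is irrelevant to the language because after the first reset the breakpoint sets are entirely determined by the $Q_1,Q_2$ sequence, and before the first reset we are only ``more demanding'', which, under the hypothesis that all branches accept, still empties.

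For the second bullet, note that by the \emph{consistency} constraint $Q_2 = \states\setminus Q_1$, so $Q_2$ is determined by $Q_1$; hence the first bullet (ignoring the $Q_3,Q_4$ choice, which we just argued is immaterial to the language) already says $\lang{\B^{(Q_1,Q_2,Q_3,Q_4)}}$ depends only on $Q_1$, giving the equality case $Q_1=Q_1'$ immediately. For disjointness when $Q_1\neq Q_1'$: I would use the dual-automaton identity from the lemma on dual AWAs, namely $\lang{\A^q}=\infwords\setminus\lang{\dual{\A}^q}$. If $w$ were in both languages, then by the first bullet $w\in\bigcap_{s\in Q_1}\lang{\A^s}\cap\bigcap_{s\notin Q_1}\lang{\dual{\A}^s}$ and also $w\in\bigcap_{s\in Q_1'}\lang{\A^s}\cap\bigcap_{s\notin Q_1'}\lang{\dual{\A}^s}$. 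Pick $q$ in the symmetric difference, say $q\in Q_1\setminus Q_1'$; then $w\in\lang{\A^q}$ from the first and $w\in\lang{\dual{\A}^q}$ from the second, contradicting $\lang{\A^q}\cap\lang{\dual{\A}^q}=\emptyset$. The case $q\in Q_1'\setminus Q_1$ is symmetric, so the two languages are disjoint.

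The main obstacle I anticipate is the ``synchronisation'' direction of the first bullet: arguing carefully that from two independent breakpoint macroruns (one for $\A^S$, one for $\dual{\A}^{\states\setminus S}$), each emptying its breakpoint infinitely often but possibly at different times, one can build a \emph{single} macrorun of $\B$ that resets \emph{both} coordinates only at common empty-times and still has each coordinate empty infinitely often. This needs the observation that delaying a reset of an already-empty breakpoint set keeps it empty (an empty set trivially satisfies the conjunction, and its successor under the ``$Q_3\neq\emptyset$ or $Q_4\neq\emptyset$'' branch with $Q_3=\emptyset$ is again empty), so the faster side simply ``waits'' at $\emptyset$ for the slower side; then a König-style interleaving argument produces infinitely many common empty-times. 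Everything else is a routine unwinding of the run-DAG definitions and the determinacy of the $Q_1,Q_2$ components.
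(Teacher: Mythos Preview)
The paper does not prove this lemma; it cites it from \cite{DBLP:conf/cav/BustanRV04}. So there is no paper argument to compare against, only your proposal to assess.

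Your overall plan is right, and your derivation of the second bullet from the first (via $\lang{\A^q}=\infwords\setminus\lang{\dual{\A}^q}$ and picking a state in the symmetric difference) is clean and is exactly how the paper exploits the lemma immediately afterwards. The forward direction of the first bullet is also essentially correct.

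The genuine gap is in the backward direction of the first bullet, and it comes from your repeated claim that $Q_1'$ must be the \emph{minimal} set satisfying $\bigwedge_{s\in Q_1}\trans(s,\sigma)$ so that ``$Q_1,Q_2$ evolve deterministically''. This is false. If you build the $\N(\A)$ side with minimal $Q_1$-sets and the $\N(\dual{\A})$ side with minimal $Q_2$-sets, the two sequences will in general \emph{not} partition $\states$, so they cannot be spliced into a macrorun of $\B$ at all; and even a single minimal $Q_1^{i+1}$ can force states into $Q_2^{i+1}$ from which $\dual{\A}$ rejects the suffix, killing the $\dual{\A}$-breakpoint. Your intuition ``supersets only add branches and cannot help'' ignores that enlarging $Q_1'$ \emph{shrinks} $Q_2'$, which may be essential. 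The correct choice is semantic, not syntactic: take $Q_1^i=\{s\in\states\mid \wordletter{w}{i\cdots}\in\lang{\A^s}\}$ and $Q_2^i=\states\setminus Q_1^i$. This is the set the paper calls $Q_{\wordletter{w}{i\cdots}}$ in Section~\ref{sec:unique-sequence}, and the argument for L2 and L3 there shows that this single sequence simultaneously satisfies both local constraints $Q_1^{i+1}\models\bigwedge_{s\in Q_1^i}\trans(s,\wordletter{w}{i})$ and $Q_2^{i+1}\models\bigwedge_{s\in Q_2^i}\dual{\trans}(s,\wordletter{w}{i})$. With this $Q_1$-sequence fixed, your synchronisation argument for $Q_3,Q_4$ then goes through. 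Relatedly, ``after the first reset the breakpoint sets are entirely determined by the $Q_1,Q_2$ sequence'' is also false---the $Y_3,Y_4$ choices are genuinely nondeterministic, which is precisely the content of Lemma~\ref{lem:cav-not-UBA}---but you do not actually need that claim: irrelevance of the initial $Q_3,Q_4$ already follows because the backward construction above works from any such starting pair.
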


Let $w \in \lang{\B}$ and $\rho = (Q^0_1, Q^0_2, Q^0_3, Q^0_4)(Q^1_1, Q^1_2, Q^1_3, Q^1_4)\cdots$ be an accepting macrorun of $\B$ over $w$.
According to Lemma~\ref{lemma:cav-macrostates}, it is easy to see that the $Q_1$-set sequence $Q^0_1 Q^1_1 \cdots$ is in fact \emph{unique} for every accepting macrorun over $w$.
If there are two accepting macroruns, say $\rho_1$ and $\rho_2$, of $\B$ over $w$ that have two different $Q_1$-set sequences, there must be a position $j \geq 0$ such that their $Q_1$-sets differ.
By Lemma~\ref{lemma:cav-macrostates}, the suffix $\wordletter{w}{j\cdots}$ cannot be accepted from both macrostates $\wordletter{\rho_1}{j}$ and $\wordletter{\rho_2}{j}$, leading to contradiction.
Therefore, every accepting macrorun of $\B$ over $w$ corresponds to a unique sequence of $Q_1$-sets.
However, $\B$ does not necessarily have only one accepting macrorun over $w$, because there is \emph{nondeterminism} in developing the breakpoints.

\begin{lemma}\label{lem:cav-not-UBA}
    The NBA $\B$ defined as in Definition~\ref{def:cav-construction} is not necessarily unambiguous.
\end{lemma}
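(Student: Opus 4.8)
The plan is to exhibit a concrete AWA $\A$ whose associated NBA $\B$ (per Definition~\ref{def:cav-construction}) admits two distinct accepting macroruns over a single word, thereby witnessing ambiguity. The source of the nondeterminism is clear from the construction: when $Q_3 \neq \emptyset$ (or $Q_4 \neq \emptyset$), the successor sets $Y_3$ and $Y_4$ are only required to satisfy $Y_3 \models \bigwedge_{s \in Q_3}\trans(s,\sigma)$ with $Y_3 \subseteq Q'_1$ — and whenever a transition formula $\trans(s,\sigma)$ is a proper disjunction, or more generally has several distinct satisfying subsets of $Q'_1$, there are genuinely different legal choices of $Y_3$, hence different $Q'_3 = Y_3 \setminus \acc$. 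Since the $Q_1$-component is uniquely determined (by Lemma~\ref{lemma:cav-macrostates}, any accepting macrorun over $w$ has the same $Q_1$-sequence) but the $Q_3$-component is not pinned down by it, two macroruns can differ only in their breakpoint components and still both be accepting.

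Concretely, I would look for a small AWA with an accepting SCC containing at least two states, say $q$ and $r$, both non-accepting, with a transition like $\trans(q,a) = r \lor s$ where $s$ lies in an accepting sink (or is already accepting), so that from a macrostate with $q$ in the $Q_3$-set one may either keep tracking $r$ (breakpoint does not yet clear) or drop to $s$ (breakpoint clears immediately). One then chooses a word $w = a^\omega$ (or a lasso word) on which \emph{both} resolutions lead to $Q_3$ falling empty infinitely often — e.g.\ one branch clears the breakpoint every step, the other clears it every two steps — so that both macroruns are accepting. The $Q_2,Q_4$ components can be set up to behave deterministically (or symmetrically) so that they do not interfere; picking $\A$ so that $\dual{\A}$'s relevant transitions are conjunctions makes the $Q_4$ part forced. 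I would then spell out the two macroruns explicitly as ultimately periodic sequences of tuples, verify both satisfy the transition relation of Definition~\ref{def:cav-construction}, and check that each hits $\acc_{\B} = \setcond{(Q_1,Q_2,Q_3,Q_4)}{Q_3 = Q_4 = \emptyset}$ infinitely often.

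The main obstacle is engineering the example so that the two breakpoint resolutions are \emph{both} accepting rather than one of them stalling with a non-empty $Q_3$ forever; this requires that the "lazy" branch (the one that keeps the larger $Y_3$) still eventually discharges all of its pending obligations and resets infinitely often. The weakness hypothesis actually helps here: within an accepting SCC every branch is "allowed" to be accepting, so with a careful choice of transitions one can ensure every obligation is eventually met no matter which $Y_3$ is chosen, while still leaving real freedom in how fast the breakpoint clears. A secondary subtlety is ensuring $\lang{\B} = \lang{\A}$ is not accidentally trivialised (e.g.\ that $w$ really is in the language), but this follows from the correctness of the Bustan--Rubin--Vardi construction once the example is fixed. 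I would present the witness AWA with an explicit diagram or transition table, the word, and the two macroruns, and conclude that $\B$ has two accepting macroruns over $w$, so it is not unambiguous.
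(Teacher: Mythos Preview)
Your proposal is correct and takes essentially the same approach as the paper: exhibit a concrete AWA and a word on which the breakpoint component $Q_3$ admits multiple legal updates (via a disjunctive transition out of some non-accepting state), each of which still clears infinitely often, yielding distinct accepting macroruns with the same $Q_1$-sequence. One slip to fix: you write ``an accepting SCC containing at least two states \ldots\ both non-accepting,'' which contradicts the weakness condition; you mean a \emph{rejecting} SCC from which branches can escape to an accepting sink---and indeed the paper's witness has exactly this shape (non-accepting $q,s,t$ forming a rejecting SCC, $q$ with a disjunctive $b$-transition to $s$ or $t$, and accepting states $p,r$ reachable from both).
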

\begin{figure}
    \centering
    \includegraphics{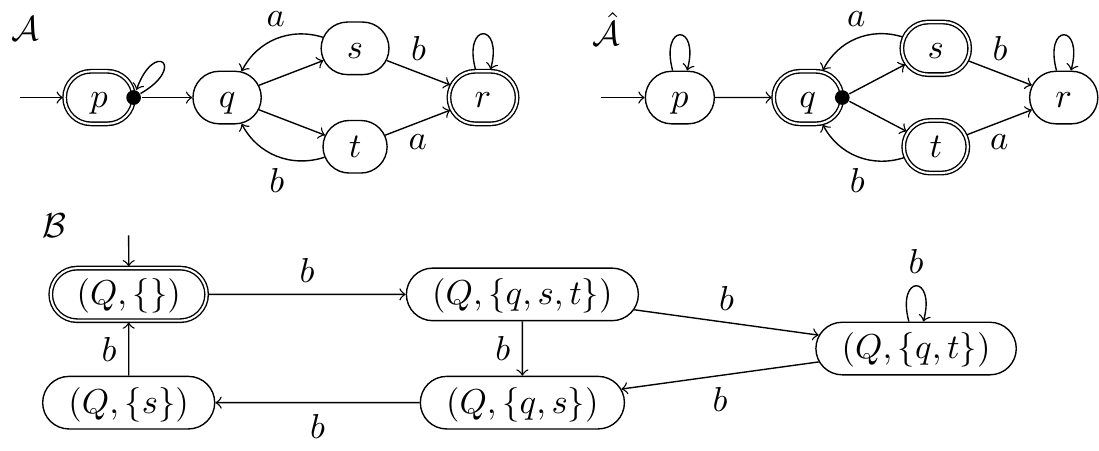}
    \caption{An example of an AWA $\A$, its dual $\dual{\A}$ and \emph{incomplete} part of the constructed $\B$ over $b^{\omega}$, where for instance the transition $((Q, \setnocond{q, s}), b, (Q, \setnocond{t}))$ is missing.}
    \label{fig:awa-dual}
\end{figure}
\begin{proof}
We prove Lemma~\ref{lem:cav-not-UBA} by giving an example AWA $\A$ for which the constructed $\B$ is \emph{not} unambiguous.
    The example AWA $\A$ and its dual $\dual{\A}$ are given in Figure~\ref{fig:awa-dual} where accepting states are depicted with double circles, initial states are marked with an incoming arrow and universal transitions are originated from a black filled circle.
    The transitions are by default labelled with $\alphabet = \setnocond{a,b}$ unless explicitly labelled otherwise.
    We let $Q = \setnocond{p,q,s,t, r}$.
    First, we can see that $b^{\omega} \in \lang{\A^p} \cap \lang{\A^q} \cap \lang{\A^s} \cap \lang{\A^t} \cap\lang{\A^{r}}$.
    So the unique $Q_1$-sequence of all accepting macroruns in $\B$ over $b^{\omega}$ should be $Q^{\omega}$, according to Lemma~\ref{lemma:cav-macrostates}.
    We only depict an \emph{incomplete} part of $\B$ over $b^{\omega}$ where we ignore the $Q_2$ and $Q_4$ sets because we have constantly $Q_2 = \setnocond{}$ and $Q_4 = \setnocond{}$ by definition.
    One of the initial macrostates is $m_0 = (Q,  \setnocond{})$, which is also accepting.
    When reading the letter $b$, we always have $\setnocond{p, q, s, t, r} \models \land_{c \in Q}\trans(c, b)$.
    Thus, the successor of $m_0$ over $b$ is $m_1 = (Q, Q\setminus \setnocond{p,r}) = (Q, \setnocond{q,s,t})$ since the breakpoint set $Q'_3$ needs to be reset to $Q'_1
    \setminus\acc$ when $Q_3=\{\}$.
    When choosing the successor set $Q'_3$ for $Q_3=\setnocond{q,s,t}$ at $m_1$, we have two options, namely $\setnocond{q,s}$ and $\setnocond{q,t}$, since $q$ has nondeterministic choices upon reading letter $b$.
    Consequently, $\B$ can transition to either $m_2 = (Q, \setnocond{q,s})$ or $m_3 = (Q, \setnocond{q,t})$, upon reading $b$ in $m_1$.
    In fact, all the nondeterminism of $\B$ in Figure \ref{fig:awa-dual} are caused by nondeterministic choices at $q$.
    We can continue to explore the state space of $\B$ according to Definition~\ref{def:cav-construction} and obtain the incomplete part of $\B$ depicted in Figure~\ref{fig:awa-dual}.
    Note that, we have ignored some outgoing transitions from $(Q, \setnocond{q, s})$ since the present part already suffices to prove Lemma~\ref{lem:cav-not-UBA}.
    It is easy to see that $\B$ has at least two accepting macroruns over $b^{\omega}$.
    Thus we have proved Lemma~\ref{lem:cav-not-UBA}.
\end{proof}

In fact, based on Definition~\ref{def:cav-construction}, it is easy to compute a unique sequence of sets of states for each given word, which builds the foundation of our proposed construction.

\subsection{Unique sequence of sets of states for each word}
\label{sec:unique-sequence}
In the remainder of the paper, we fix an AWA $\A = (\alphabet, \states, \iota, \trans, \acc)$.
For every word $w \in \infwords$, we define a \emph{unique} sequence of sets of states associated with it as the sequence $Q^0_1 Q^1_1 Q^2_1\cdots$ such that, for every $i\geq 0$, we have that:
\begin{itemize}
    \item[\hspace{6mm}P1]\label{itm:sbst} $Q^i_1 \subseteq \states$,
    \item[\hspace{6mm}P2]\label{itm:gSeqPos} for every state $q \in \states^i_1$, $w[i\cdots] \in \mathcal L(\A^q)$ and
    \item[\hspace{6mm}P3]\label{itm:gSeqNeg} for every state $q\in\states\setminus Q^i_1$, $w[i\cdots] \notin \mathcal L(\A^q)$ \hfill \mbox{(or, similarly, $w[i\cdots] \in \mathcal L(\widehat{\A}^q)$).}
\end{itemize}
These properties immediately entail the weaker \emph{local} consistency requirements:
\begin{itemize}
    \item[\hspace{6mm}L2]\label{itm:SeqPos} for every state $q \in \states^i_1$, $Q^{i+1}_1 \models \trans(q, \wordletter{w}{i})$ \hfill \mbox{(entailed by P2) and}
    \item[\hspace{6mm}L3]\label{itm:SeqNeg} for every state $q\in\states\setminus Q^i_1$, $ \states\setminus Q^{i+1}_1\models \dual{\trans}(q, \wordletter{w}{i})$ \hfill \mbox{(entailed by P3).}
\end{itemize}

It is obvious that,  for every state $s\in\states$, $\infwords = \lang{\A^s} \uplus \overline{\lang{\A^s}} = \lang{\A^s} \uplus \lang{\dual{\A}^s}$ holds.
We define $Q_{w} = \setcond{s \in \states}{ w \in \lang{\A^s}}$.
This clearly provides $\states\setminus Q_{w} = \setcond{s \in \states}{ w \in \lang{\dual{\A}^s}}$.
For every $w \in \infwords$, we therefore have
\[ w \in \bigcap_{s\in Q_w} \lang{\A^s} \cap \bigcap_{s\in  \states\setminus Q_w}\overline{\lang{\A^s}}\text{ or, equivalently, } w \in \bigcap_{s\in Q_w} \lang{\A^s} \cap \bigcap_{s\in  \states\setminus Q_w}\lang{\dual{\A}^s}.\]
For every $i \geq 0$, P2 and P3 are then equivalent to the requirement $Q^i_1 = Q_{\wordletter{w}{i\cdots}}$.

To see how the local constraints L2 and L3 can be obtained from P2 and P3, respectively, we fix an integer $i\geq 0$.
Let $s\in Q^{i}_1$, so we know that $\A^s$ accepts $\wordletter{w}{i\cdots}$.
Let $S^{i+1}$ be the set of successors of $s$ in an accepting run DAG of $\A^s$ over $\wordletter{w}{i\cdots}$, i.e., $S^{i+1}\models \trans(s, \wordletter{w}{i})$.
As the run DAG is accepting, we in particular have, for every $t \in S^{i+1}$, that $\A^{t}$ accepts $\wordletter{w}{i+1 \cdots}$, which implies $S^{i+1} \subseteq Q^{i+1}_1$.
With $S^{i+1}\models \trans(s, \wordletter{w}{i})$, this provides $Q^{i+1}_1 \models \trans(s, \wordletter{w}{i})$, and thus L2.

Similarly, we can also show that, for every state $q \in \states \setminus Q^{i}_1$, we have $\states\setminus Q^{i+1}_1 \models \dual{\trans}(q, \wordletter{w}{i})$.
As before, $\dual{\A}^q$ accepts $\wordletter{w}{i\cdots}$ for every $q \in \states\setminus Q^{i}_1$ by definition.
We let $S^{i+1}$ be the set of successors of $q$ in an accepting run DAG of $\dual{\A}^q$.
This implies at the same time $S^{i+1}\models\dual{\trans}(q, \wordletter{w}{i})$ (local constraints for the run DAG) and $S^{i+1} \subseteq \states\setminus Q^{i+1}_1$ (as the subgraphs starting there must be accepting).
Together, this provides $\states\setminus Q^{i+1}_1 \models \dual{\trans}(q, \wordletter{w}{i})$, and thus L3 also holds.

Moreover, every set $Q^i_1$ is uniquely defined based on the word $\wordletter{w}{i\cdots}$.
Therefore, the sequence $\R_w = Q^0_1Q^1_1\cdots$ we have defined above indeed is the unique sequence satisfying P1, P2, and P3.
Let us consider again the NBA construction of Definition~\ref{def:cav-construction}:
obviously, it enforces the local consistency requirements L2 and L3 on the definition of the transition relation $\trans_{\B}$, which is the necessary condition for the $Q_1$-sequence being unique;
the sufficient condition that $Q^i_1 = Q_{\wordletter{w}{i\cdots}}$ for all $i \in \naturals$ is guaranteed with the two breakpoint constructions.

In the remainder of the paper, we denote this unique sequence for a given word $w$ by $\R_{w}$.
The UBA we will construct has to guess (not only) this unique sequence correctly on the fly, but also when it leaves each SCC, as shown later.

\subsection{Unique distance functions}
\label{ssec:distance}
As discussed before, we have a unique sequence $\R_{w} = Q^0_1Q^1_1\cdots$ for $w$.
However, as we have seen in Section \ref{ssec:awa-nba}, $\R_w$ alone does not suffice to yield an UBA.
The construction from Section \ref{ssec:awa-nba}, for example, validates that all rejecting SCCs can be left using breakpoints, and we have shown how that leaves leeway w.r.t.\ how these breakpoints are met.
In this section, we discuss a different, an unambiguous (but not finite) way to check the correctness of $\R_w$ by making the minimal time it takes from a state, for the given input word, to leave the rejecting SCC of $\A$ or $\widehat{\A}$ on every branch of this run DAG.
For instance, in Figure~\ref{fig:awa-dual}, it is possible to select different successors for state $q$ when reading a $b$, going to either $s$ or $t$.
One of them will lead to leaving this SCC immediately, either $s$ (when reading a $b$) or $t$ (when reading an $a$).
For acceptance, the choice does not matter---so long as the correct choice is eventually made.
On the word $b^\omega$, for example in $\A$, we could go to $t$ the first $20$ times, and to $s$ only in the $21^{st}$ attempt; the answer to the question `how long does it take to leave the SCC starting in $q$ on this run DAG?' would be $42$.

The \emph{shortest} time, however, is well defined. In the example automaton $\A$, it depends on the next letter: if it is $a$, then the distance is $1$ from $t$, $2$ from $q$, and $3$ from $s$, and when it is $b$, then the distance is $1$ from $s$, $2$ from $q$, and $3$ from $t$.

To reason about the minimal number of steps it takes from a state within a rejecting SCC that needs to leave it, we will define a \emph{distance function}.

Formally, we denote by $R$ the set of states in all rejecting SCCs of $\A$ and $A$ the set of states in all accepting SCCs of $\A$.
For a given word $w$ and its unique sequence $\R_w$, we identify the unique distance%
\footnote{Note that, while the distance is unique, the way does not have to be. To see this, we could just expand the alphabet of $\A$ by adding a letter $c$, and by adding $c$ to the transitions from both $s$ and $t$ to $r$. Then there are two equally short (length $2$) ways from $q$ to $r$ whenever the next letter is $c$.}
to leave a rejecting SCCs at each level $i$ in $\G_w$ by defining a distance function $d_i: (Q^i_1\cap R) \uplus (A\setminus Q^i_1) \rightarrow \naturals^{>0}$ for each $i \in \naturals$.

\begin{definition}\label{def:consistent-distances}
    Let $w$ be a word and $\R_w = Q^0_1 Q^1_1
    \cdots$ be its unique sequence of sets of states.
    We say $\Phi_w = (Q^0_1, d_0) (Q^1_1, d_1)\cdots$ is \emph{consistent} if, for every $i \in \naturals$, we have $(Q^i_1, d_i)$ and $(Q^{i+1}_1, d_{i+1})$ satisfy the following rules:
    \begin{enumerate}
    \item[\bf{R1}.]\label{itm:rej-succ} For every state $p \in R \cap Q^i_1$ that belongs to a rejecting SCC $C$ in $\A$,
    \[a:\ (Q^{i+1}_1\setminus C) \cup \{q \in C \cap Q^{i+1}_1 \mid d_{i+1}(q) \leq d_i(p)-1\}\models \delta(p, \wordletter{w}{i}) \text{ and }\]
    \[b:\ \text{for }d_i(p) > 1, (Q^{i+1}_1\setminus C) \cup \{q \in C \cap Q^{i+1}_1\mid d_{i+1}(q) \leq d_i(p)-2\}\not\models \trans(p, \wordletter{w}{i}) \mbox{ hold}.\]

    \item[\bf{R2}.]\label{itm:acc-succ} For every state $p \in A\setminus Q^i_1$ that belongs to an accepting SCC $C$ in $\A$,
    \[a:\ \big(Q\setminus (Q^{i+1}_1\cup C)\big) \cup \{q \in C \setminus Q^{i+1}_1 \mid d_{i+1}(q) \leq d_i(p)-1\}\models \dual{\trans}(p,\wordletter{w}{i})\text{ and}\]
    \[b:\ \text{for }d_i(q)>1, \big(Q\setminus (Q^{i+1}_1\cup C)\big) \cup \{q \in C \setminus Q^{i+1}_1 \mid d_{i+1}(q) \leq d_i(p)-2\}\not\models \dual{\trans}(p,\wordletter{w}{i}) \text{ hold.}\]
\end{enumerate}
\end{definition}

Intuitively, the distance function is to define a \emph{minimal} number of steps to escape from rejecting SCCs over different accepting run DAGs and \emph{maximal} over different branches of one such run DAG.
For instance, when $d_i(p) = 1$, we have that $Q^{i+1}_1 \setminus C \models \trans(p, \wordletter{w}{i})$ if $p \in Q^i_1 \cap R$, otherwise $\states \setminus (Q^{i+1}_1 \cup C) \models \trans(p, \wordletter{w}{i})$ if $p \in A \setminus Q^i_1$.
It means that $p$ can escape from $C$ within one step from an accepting run DAG $\G_{\wordletter{w}{i\cdots}}$ starting from $\vertex{p}{0}$.
\begin{restatable}{lemma}{uniqueDistance}\label{lem:unique-distance}
For each $w \in \infwords$, there is a unique consistent sequence $\Phi_w =(Q^0_1, d_0) (Q^1_1, d_2) \cdots$ where $Q^0_1 Q^1_1 Q^2_1 \cdots$ is $\R_w$ and $d_0 d_1 \cdots$ is the sequence of distance functions.
\end{restatable}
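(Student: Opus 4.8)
The plan is to treat the two ingredients of $\Phi_w$ separately. The $Q_1$-sequence $\R_w = Q^0_1 Q^1_1 \cdots$ is already established to be the unique sequence satisfying P1--P3 (Section~\ref{sec:unique-sequence}), so the only thing left is to show that, with $\R_w$ fixed, there is exactly one sequence of distance functions $d_0 d_1 \cdots$ making $\Phi_w$ consistent in the sense of Definition~\ref{def:consistent-distances}. I would isolate the combinatorics in an auxiliary monotone family of ``escape within $k$ steps'' sets: for each rejecting SCC $C$ of $\A$, each level $i$, and each $k \in \naturals$, put $L^0_i(C) = \emptyset$ and $L^{k+1}_i(C) = \{\, q \in C \cap Q^i_1 \mid (Q^{i+1}_1 \setminus C) \cup L^k_{i+1}(C) \models \trans(q, \wordletter{w}{i}) \,\}$, with a dual family $\widehat{L}^k_i(C)$ defined via $\dual{\trans}$ for the accepting SCCs of $\A$ (which are exactly the rejecting SCCs of $\dual{\A}$). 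Monotonicity of positive Boolean formulas gives $L^k_i(C) \subseteq L^{k+1}_i(C)$, and the intended meaning is that the unique choice should be $d_i(p) = \min\{\, k \geq 1 \mid p \in L^k_i(C)\,\}$.

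For existence I would first show this minimum is always defined, i.e.\ $C \cap Q^i_1 \subseteq \bigcup_k L^k_i(C)$ (and dually). Given $p \in C \cap Q^i_1$, property P2 supplies an accepting run DAG $\G$ of $\A^p$ over $\wordletter{w}{i\cdots}$; the sub-DAG $\G_C$ of vertices reachable from the root through $C$-vertices only has no infinite branch --- such a branch would stay forever in a rejecting SCC, contradicting acceptance together with the weakness of $\A$ --- hence is finite by K\"onig's Lemma, say of depth $\ell$. A downward induction on the level then shows every $C$-vertex $\vertex{q'}{j}$ of $\G_C$ lies in $L^{\ell-j+1}_{i+j}(C)$, using that successor sets in $\G$ satisfy the local transition constraints and that every vertex of an accepting run DAG has its suffix accepted, hence belongs to the corresponding $Q_1$-set; specialising to $j=0$ puts $p \in L^{\ell+1}_i(C)$. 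Writing $d^{\star}_i$ for the resulting functions (and their duals), I would then check R1 and R2 by plain unfolding: $d^{\star}_i(p) = k$ means $(Q^{i+1}_1 \setminus C) \cup L^{k-1}_{i+1}(C) \models \trans(p,\wordletter{w}{i})$ but, for $k>1$, not with $L^{k-2}_{i+1}(C)$, and $L^{m}_{i+1}(C) = \{\, q \in C \cap Q^{i+1}_1 \mid d^{\star}_{i+1}(q) \leq m \,\}$ by construction, so substitution yields R1a and R1b verbatim, with R2 symmetric through $\dual{\A}$.

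For uniqueness, let $d_0 d_1 \cdots$ be any distance functions making $\Phi_w$ consistent. The first observation is that R1 already pins down each value: since $\{\, q \in C \cap Q^{i+1}_1 \mid d_{i+1}(q) \leq m-1 \,\}$ grows with $m$, the set $T_p = \{\, m \geq 1 \mid (Q^{i+1}_1 \setminus C) \cup \{\, q \in C \cap Q^{i+1}_1 \mid d_{i+1}(q) \leq m-1 \,\} \models \trans(p,\wordletter{w}{i}) \,\}$ is upward closed, and R1a together with R1b say precisely $d_i(p) = \min T_p$ (the case $d_i(p)=1$ being R1a with the empty $C$-contribution); likewise for accepting SCCs. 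The core step is then an induction on the distance value $k$ showing, for all $i$ and all $p$ in the domain of $d_i$ with SCC $C$, that $d_i(p) \leq k \iff p \in L^k_i(C)$ (resp.\ $\widehat{L}^k_i(C)$). The base case $k=0$ is trivial; for $k+1$ one chains $d_i(p) \leq k+1 \iff k+1 \in T_p \iff (Q^{i+1}_1 \setminus C) \cup \{\, q \in C \cap Q^{i+1}_1 \mid d_{i+1}(q) \leq k \,\} \models \trans(p,\wordletter{w}{i}) \iff (Q^{i+1}_1 \setminus C) \cup L^k_{i+1}(C) \models \trans(p,\wordletter{w}{i}) \iff p \in L^{k+1}_i(C)$, invoking the induction hypothesis at level $i+1$ and value $k$. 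Since each $d_i(p)$ is a positive integer, this forces $d_i(p) = \min\{\, k \mid p \in L^k_i(C) \,\} = d^{\star}_i(p)$, hence uniqueness.

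The step I expect to be the real obstacle is the uniqueness argument, because the consistency rules only ever relate level $i$ to level $i+1$, so there is no obvious well-founded order running forward through the word, and $L$ and $d$ are themselves defined by a backward recursion in the level. The resolution I would rely on is to induct on the distance value rather than on the level: the level-$i$/value-$(k+1)$ instance of the equivalence $d_i(p) \leq k \iff p \in L^k_i(C)$ reduces to the strictly smaller level-$(i{+}1)$/value-$k$ instance, so the value parameter supplies a genuine well-founded measure even though levels never decrease. A secondary technical point worth handling with care is the K\"onig's Lemma argument in the existence direction, which is exactly where the weakness of $\A$ is used --- a branch cannot linger in a rejecting SCC --- and which is what guarantees that the distance functions are finite-valued in the first place.
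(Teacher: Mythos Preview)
Your proof is correct and shares the paper's central idea---induction on the distance value~$k$ rather than on the level~$i$---but it packages the argument differently, and in a way that is arguably cleaner.

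The paper defines $d_i(q)$ operationally: pick an accepting run DAG from $q$, let $K$ be the maximal escape depth along its branches, then minimise this maximum over all accepting run DAGs (truncated to depth $K$). Existence is argued directly from this minimax, and consistency (R1a/R1b) is then verified by a separate induction on $k$. Uniqueness is handled in two steps: first any consistent $\mathbf c$ yields accepting run DAGs, so $c_i(q)\geq d_i(q)$; second, an induction on $k$ shows the pre-images of $c_i$ and $d_i$ for each value coincide.

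Your route replaces the minimax by the recursive ``escape-in-$k$-steps'' sets $L^k_i(C)$, and sets $d^\star_i(p)=\min\{k\mid p\in L^k_i(C)\}$. This has two advantages. First, R1a/R1b become literal unfoldings of the definition of $L^k_i(C)$ together with the identity $L^m_{i+1}(C)=\{q\mid d^\star_{i+1}(q)\leq m\}$, so no separate consistency induction is needed. Second, uniqueness reduces to the single equivalence $d_i(p)\leq k\iff p\in L^k_i(C)$, whose inductive step uses the neat observation $d_i(p)=\min T_p$ extracted from R1a/R1b; the paper's intermediate inequality $c_i\geq d_i$ is absorbed into this equivalence. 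What the paper's formulation buys in return is a concrete witness: the explicit run DAG achieving the minimax, which it later reuses informally when relating distances to preorders. Your K\"onig argument supplies the same finiteness guarantee (and is exactly where weakness enters), just without naming the witnessing DAG.
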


One can easily construct a consistent sequence of distance functions as follows.
Let $C$ be a rejecting SCC of $\A$; the case for a rejecting SCC of $\dual{\A}$ is entirely similar.
Below, we describe how to obtain a sequence of distance values for each state $q \in C \cap Q^i_1$ with $i \geq 0$ in order to form a consistent sequence $\Phi_w$.
For $q \in C \cap Q^i_1$ at the level $i$, we first obtain an accepting run DAG $\G_{\wordletter{w}{i\cdots}}$ over $\wordletter{w}{i\cdots}$ starting from $\vertex{q}{0}$.
One can define the maximal distance, say $K$, over \emph{all} branches from $\vertex{q}{0}$ to escape the rejecting SCC $C$.
Such a maximal distance value must exist and be a finite value, since all branches will eventually get trapped in accepting SCCs.
For all accepting run DAGs $\G'_{\wordletter{w}{i\cdots}}$ over $\wordletter{w}{i\cdots}$ starting from the vertex $\vertex{q}{0}$, there are only finitely many run DAGs of depth $K$ from $\vertex{q}{0}$;
we denote the finite set of such run DAGs of depth $K$ by $P_{q,i}$.
We then denote the maximal distance over one \emph{finite} run DAG $G_{q,i, K} \in P_{q,i}$ by $K_{G_{q,i, K}}$.
(Note that we set the distance to $\infty$ for a finite branch in $G_{q,i, K}$ if it does not visit a state outside $C$.)
We then set $d_i(q) = \min \setnocond{ K_{G_{q,i, K}}: G_{q,i, K} \in P_{q,i}} \leq K$.
One of $G_{q,i, K}$ must provide the \emph{minimal} value, so that $d_i(q)$ is well defined.
This way, we can define the sequence of distance functions $\textbf{d} = d_0 d_1 \cdots$ for the sequence $\R_w$.
We can also prove that the sequence $\R_w \times \textbf{d}$ is consistent by an induction on all the distance values $k > 0$; We refer to Appendix~\ref{app:unique-distance} for the details.

The proof for the uniqueness of $\textbf{d}$ to $\R_w$ can also be obtained by an induction on the distance value $k > 0$; See also Appendix~\ref{app:unique-distance}.
The intuition is that every consistent sequence of distance functions $\textbf{c}$ does not have smaller distance values than $\textbf{d}$ for every state $q \in C\cap Q^i_1$ (see the construction of $\textbf{d}$ above), and if $\textbf{c}$ does have greater distance values for some state, a violation of the consistency constraints in Definition~\ref{def:consistent-distances} will be found, leading to contradiction.

\subsection{Unique total preorders}
\label{sec:finite-repre}
The range of the sequence $\textbf{d}=d_0d_1d_2\ldots$ of distance functions for $\R_w$ is not a priori bounded by any given \emph{finite} number.
Therefore, we may need \emph{infinite} amount of memory to store $\textsf{d}$.
To allow for an abstraction of $\textsf{d}$ that preserves uniqueness and needs only finite memory, we will abstract the values of each function $d_i$ as families of total \emph{preorders}, $\{\preceq^i_C\}_{C\in \mathcal S}$, where $\mathcal S$ denotes the set of SCCs in the graph of $\mathcal A$.
For a given SCC $C \in \mathcal{S}$, a total preorder $\preceq^i_C$ is a relation defined over $H^i \times H^i$, where $H^i = C \cap Q^i_1$ if $C \subseteq R$ or $H^i = C \setminus Q^i_1$ if $C \subseteq A$;
As usual, $\preceq^i_C$ is \emph{reflexive} (i.e., for each $q \in H^i$, $q \preceq^i_C q$) and \emph{transitive} (i.e., for each $q, r, s \in H^i$, $q \preceq^i_C r$ and $r \preceq^i_C s$ implies $q \preceq^i_C s$).
We also have $q \prec^i_C r$ whenever $q \preceq^i_C r$ but $r \not\preceq^i_C q$.
We write $q \backsimeq^i_C r$ if we have $q \preceq^i_C r$ and $r \preceq^i_C q$.
Since $\preceq^i_C$ is total, for every two states $p,q \in H^i$, we have $p \preceq^i_C q$ or $q \preceq^i_C p$.
Note that $\prec^i_C$ is acyclic:
it is impossible for two states $q, p \in H^i$ satisfying $p \prec^i_C q$ and $q \prec^i_C p$.

Formally, we define a consistent sequence of total preorders as below.

\begin{definition}\label{def:consistent-preorders}
Let $w \in \infwords$ and $\R_w = Q^0_1 Q^1_1 \cdots$ be its unique sequence of sets of states.
 We say $\po_w = (Q^0_1, \{\preceq^0_C\}_{C\in \mathcal S}) (Q^1_1, \{\preceq^{1}_C\}_{C\in \mathcal S}) \cdots$ is \emph{consistent} if, for every $i \in \naturals$, we have that $(Q^i_1, \{\preceq^{i}_C\}_{C\in \mathcal S})$ and $(Q^{i+1}_1, \{\preceq^{i+1}_C\}_{C\in \mathcal S})$ satisfy the following rules:
\begin{itemize}
    \item[\bf R1'.] \label{itm:order-rej-succ}  $\forall q,q'\in C \cap Q^{i}_1 \subseteq R$, we have that
    $\ q \prec^i_C q'$ iff there exists $r \in C\cap Q^{i+1}_1$ such that
    \[a:\ \{r' \in  C\cap Q^{i+1}_1 \mid r' \prec^{i+1}_C r\} \cup (Q^{i+1}_1 \setminus C) \models \trans(q,\wordletter{w}{i}) \text{ and}\]
       \[b:\ \{r' \in C\cap Q^{i+1}_1 \mid r' \prec^{i+1}_C r\} \cup (Q^{i+1}_1 \setminus C) \not\models \trans(q',\wordletter{w}{i}) \text{ hold, } \] where $C \subseteq R$  is a rejecting SCC of $\A$.

    \item[\bf R2'.] \label{itm:order-acc-succ}  $\forall q,q'\in C\setminus Q^{i}_1 \subseteq A$, we have $q \prec^i_C q'$ iff there exists $r \in C\setminus Q^{i+1}_1$ such that
    \[a:\ \{r' \in  C\setminus Q^{i+1}_1 \mid r' \prec^{i+1}_C r\} \cup \big(Q\setminus (Q^{i+1}_1\cup C)\big) \models \dual{\trans}(q,\wordletter{w}{i}) \text{ and }\]
       \[b:\ \{r' \in C\setminus Q^{i+1}_1 \mid r' \prec^{i+1}_C r\} \cup \big(Q\setminus (Q^{i+1}_1\cup C)\big)\not\models \dual{\trans}(q',\wordletter{w}{i}) \text{ hold, }\] where $C \subseteq A$ is an accepting SCC of $\A$.
\end{itemize}
\end{definition}

As the names indicate, the Rules R1' and R2' correspond to Rules R1 and R2, respectively, from Definition~\ref{def:consistent-distances}.
We will first show that there is a consistent sequence of total preorders for each word.
\begin{lemma}\label{lem:exist-preorders}
    For each word $w\in\infwords$, there exists a consistent sequence $\po_w = (Q^0_1,\{\preceq^{0}_C\}_{C\in \mathcal S}) (Q^1_1, \{\preceq^{1}_C\}_{C\in \mathcal S}) \cdots$, where $Q^0_1 Q^1_1 \cdots$ is the unique sequence $\R_w$.
\end{lemma}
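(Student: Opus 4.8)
The plan is to derive the consistent sequence of total preorders $\po_w$ directly from the unique consistent sequence of distance functions $\Phi_w = (Q^0_1, d_0)(Q^1_1, d_1)\cdots$ whose existence is guaranteed by Lemma~\ref{lem:unique-distance}. For each level $i$ and each rejecting SCC $C \subseteq R$, I would set $H^i = C \cap Q^i_1$ and define $q \preceq^i_C q'$ if and only if $d_i(q) \le d_i(q')$; symmetrically, for each accepting SCC $C \subseteq A$ I would set $H^i = C \setminus Q^i_1$ and define $q \preceq^i_C q'$ if and only if $d_i(q) \le d_i(q')$. Since $d_i$ maps into $\naturals^{>0}$, a linear order, each $\preceq^i_C$ is immediately reflexive, transitive and total, so it is a genuine total preorder on $H^i$; moreover $q \prec^i_C q'$ holds exactly when $d_i(q) < d_i(q')$, and $\{r' \in H^{i+1} \mid r' \prec^{i+1}_C r\}$ is exactly the set of states in $H^{i+1}$ with distance strictly smaller than $d_{i+1}(r)$.

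The heart of the proof is to check that this family $\po_w = (Q^0_1, \{\preceq^0_C\}_{C})(Q^1_1, \{\preceq^1_C\}_{C})\cdots$ satisfies Rules R1' and R2'. I will argue R1'; R2' is symmetric via the dual automaton $\dual{\A}$. Fix a rejecting SCC $C \subseteq R$, a level $i$, and two states $q, q' \in C \cap Q^i_1$. For the ``only if'' direction assume $q \prec^i_C q'$, i.e.\ $d_i(q) < d_i(q')$. Choose $r \in C \cap Q^{i+1}_1$ to be a state realising the value $d_i(q)-1$ in the sense needed; concretely, among the states of $C \cap Q^{i+1}_1$ that witness R1 for $q$, pick one whose distance equals $d_i(q)-1$ (if $d_i(q)=1$ the witness set is empty and one argues the degenerate case separately, noting $(Q^{i+1}_1 \setminus C) \models \trans(q, w[i])$ directly). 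Then $\{r' \in C \cap Q^{i+1}_1 \mid r' \prec^{i+1}_C r\} \cup (Q^{i+1}_1 \setminus C) = \{r' \mid d_{i+1}(r') \le d_i(q)-2\} \cup (Q^{i+1}_1 \setminus C)$ when combined with $r$ itself gives the set $\{r' \mid d_{i+1}(r') \le d_i(q)-1\} \cup (Q^{i+1}_1\setminus C)$, which by R1a satisfies $\trans(q, w[i])$; and the set without $r$ — i.e.\ $\{r' \mid d_{i+1}(r') \le d_i(q)-2\} \cup (Q^{i+1}_1\setminus C)$ — fails to satisfy $\trans(q', w[i])$ because $d_i(q')-1 > d_i(q)-1 \ge d_i(q)-1$, and R1b for $q'$ applied with the value $d_i(q')$ says that the set of states within distance $d_i(q')-2 \ge d_i(q)-1$ does not satisfy $\trans(q', w[i])$ — here I need monotonicity of satisfaction in the set argument (positive Boolean formulas are monotone) to pass from the $d_i(q')-2$ threshold down to the $d_i(q)-1$ threshold, which is the subtle point. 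For the ``if'' direction, suppose some $r$ witnesses R1'a and R1'b; then R1'b for $q'$ together with R1a/R1b for $q$ forces $d_i(q) \le d_{i+1}(r)+1 < d_i(q')$, giving $q \prec^i_C q'$.

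I expect the main obstacle to be exactly this interplay between the two-sided ``$\le d_i(p)-1$'' / ``$\not\models$ at $d_i(p)-2$'' formulation of R1 and the ``exists $r$ with strict cut below'' formulation of R1', which requires careful bookkeeping of off-by-one thresholds and repeated appeals to monotonicity of positive Boolean formulas; one must be especially careful in the boundary case $d_i(q) = 1$ (empty witness set) and when $q \backsimeq^i_C q'$ (equal distances, where neither $\prec$ direction should be derivable). A clean way to avoid fragile case analysis is to first prove an auxiliary characterisation: for any $q \in C \cap Q^i_1$ and any $k \ge 0$, $\{r' \in C \cap Q^{i+1}_1 \mid d_{i+1}(r') \le k\} \cup (Q^{i+1}_1\setminus C) \models \trans(q, w[i])$ if and only if $k \ge d_i(q) - 1$; this follows from R1a, R1b and monotonicity, and reduces R1' to a triviality about the linear order on distances. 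With that lemma in hand the verification of R1' and R2' becomes a short computation, and Lemma~\ref{lem:exist-preorders} follows immediately by taking $\po_w$ to be the image of $\Phi_w$ under this distance-to-preorder map.
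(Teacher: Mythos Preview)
Your overall plan---define $\preceq^i_C$ from the distance functions of Lemma~\ref{lem:unique-distance} via $p \preceq^i_C q \Leftrightarrow d_i(p)\le d_i(q)$ and then verify R1'/R2'---is exactly the paper's approach, and your proposed auxiliary characterisation (that $\{r'\in C\cap Q^{i+1}_1 \mid d_{i+1}(r')\le k\}\cup(Q^{i+1}_1\setminus C)\models\trans(q,w[i])$ iff $k\ge d_i(q)-1$) is a clean way to carry out that verification.

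However, your direct argument for the ``only if'' direction of R1' contains a genuine error in the choice of $r$. You pick $r$ with $d_{i+1}(r)=d_i(q)-1$; then $\{r'\mid r'\prec^{i+1}_C r\}=\{r'\mid d_{i+1}(r')\le d_i(q)-2\}$, and Rule R1b applied to $q$ says precisely that this set together with $Q^{i+1}_1\setminus C$ does \emph{not} satisfy $\trans(q,w[i])$, so R1'a fails. Your sentence ``when combined with $r$ itself gives the set \ldots'' is where the slip happens: R1'a demands the \emph{strict} downset below $r$, not the non-strict one. The paper instead chooses $r$ with $d_{i+1}(r)=d_i(q')-1$; such an $r$ exists because R1a and R1b for $q'$ would coincide otherwise. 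With that choice the strict downset is $\{r'\mid d_{i+1}(r')\le d_i(q')-2\}$, which contains $\{r'\mid d_{i+1}(r')\le d_i(q)-1\}$ (since $d_i(q')-2\ge d_i(q)-1$), so R1a for $q$ plus monotonicity gives R1'a, while R1b for $q'$ gives R1'b directly.

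Your auxiliary-lemma route avoids this pitfall entirely: once you know satisfaction at threshold $k$ is equivalent to $k\ge d_i(q)-1$, the existence of a witnessing $r$ for R1' is equivalent to the existence of some $d_{i+1}$-value in the interval $[d_i(q),\,d_i(q')-1]$, and $d_i(q')-1$ itself is realised (again by comparing R1a and R1b for $q'$). So keep the auxiliary lemma as your main argument and drop the flawed direct computation; the paper's proof is essentially the special case where one fixes $k=d_i(q')-2$.
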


\begin{proof}
It is simple to derive a consistent sequence $\po_w = (Q^0_1, \{\preceq^{0}_C\}_{C\in \mathcal S}) (Q^1_1,\{\preceq^{1}_C\}_{C\in \mathcal S})\cdots$ from $\Phi_w = (Q^0_1, d_0) (Q^1_1, d_1) \cdots$ as given in Lemma~\ref{lem:unique-distance}:
We can simply select, for all $i \in \naturals$ and $C\in \mathcal S$, $\preceq^i_C$ is the total preorder over $C\cap Q^{i}_1$ (if $C\subseteq R$) or $C\setminus Q^i_1$ (if $C \subseteq A$) with $p\preceq^i_C q$ iff $d_i(p) \leq d_i(q)$.
In particular, $p \prec^i_C q$ iff $d_i(p) < d_i(q)$.

It is easy to verify that the sequence $\po_w$ as defined above is indeed consistent.
For instance, for all $q, q' \in C\cap Q^i_1 \subseteq R$, if $q \prec^i_C q'$, then $d_i(q) < d_i(q')$ by definition.
Then we can choose the $r$-state in Definition~\ref{def:consistent-preorders} (Rule R1') such that $d_{i+1}(r) = d_i(q') - 1$. (Note that some such a state $r$ must exist since $d_i(q') > d_i(q) \geq 1$.)

Combining Definition~\ref{def:consistent-distances} (R1) and Definition~\ref{def:consistent-preorders} (R1'), we have that
Rule R1b now entails R1'b, and Rule R1a entails R1'b, because $\{r' \in C\cap Q^{i+1}_1 \mid r' \prec_C^{i+1} r\} \supseteq \{r' \in C\cap Q^{i+1}_1 \mid d_{i+1}(r') \leq d_i(q)-1\}$, because $d_i(q)-1 \leq d_i(q')-2 < d_i(q') - 1 = d_{i+1}(r)$.

The argument for accepting SCCs is using rules R2 and R2' in the same way.
\end{proof}

After discussing how such a sequence can be obtained, we now establish that it is unique.
Note, however, that it is unique for the correct sequence $\R_w$, while there may be sequences of total preorders that work with incorrect sequences of sets of states.
(For example, a total preorder can accommodate an infinite distance for all states, where the obligation to leave a rejecting SCC cannot be discharged, while the local consistency constraints can be met.)
Nonetheless, a breakpoint construction ensures to obtain the unique sequence $\R_w$.
\begin{restatable}{lemma}{mappingToPreorders}\label{lem:one-one-mapping}
Let $w$ be a word in $\infwords$ and $\Phi_w =(Q^0_1, d_0)(Q^1_1, d_1) \cdots$ be its unique consistent sequence of distance functions.
Let $\po_w = (Q^0_1, \{\preceq^{0}_C\}_{C\in \mathcal S})(Q^1_1,\{\preceq^{1}_C\}_{C\in \mathcal S})\cdots$ be a sequence satisfying Definition~\ref{def:consistent-preorders}.
Then
\begin{itemize}
    \item For every two states $q, q' \in C \cap Q^i_1 \subseteq R$, if $q \preceq^i_C q'$, then $d_i(q)\leq d_i(q')$, and in particular if $q \prec^i_C q'$, then $d_i(q) < d_i(q')$.
    \hfill \mbox{($C$ is a rejecting SCC)}

    \item For every two states $q, q' \in C \setminus Q^i_1 \subseteq A$, if $q \preceq^i_C q'$, then $d_i(q)\leq d_i(q')$, and in particular if $q \prec^i_C q'$, then $d_i(q) < d_i(q')$.
    \hfill \mbox{($C$ is an accpting SCC)}

\end{itemize}
\end{restatable}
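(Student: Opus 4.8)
The plan is to prove both bullet points simultaneously by induction on the distance value $k = d_i(q')$ of the larger state, exploiting the fact that $\Phi_w$ is the \emph{unique} consistent sequence of distance functions (Lemma~\ref{lem:unique-distance}) and that $\po_w$ satisfies the matching rules R1'/R2'. The two cases (rejecting $C\subseteq R$, accepting $C\subseteq A$) are symmetric under dualisation, so I would carry out the rejecting case in detail and then remark that R2, R2' relate to R1, R1' in exactly the way $\dual{\A}$ relates to $\A$.

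First I would set up the induction carefully. The claim ``if $q \preceq^i_C q'$ then $d_i(q) \le d_i(q')$'' is what we want; note $q \backsimeq^i_C q'$ then forces $d_i(q) = d_i(q')$ and $q \prec^i_C q'$ forces strict inequality, so it suffices to show: for all $q, q'$ with $q \prec^i_C q'$ we have $d_i(q) < d_i(q')$, and for all $q,q'$ with $q\backsimeq^i_C q'$ we have $d_i(q)=d_i(q')$. I would do a simultaneous induction on $\min(d_i(q), d_i(q'))$ (equivalently on the distance values, as the paper's proof sketches elsewhere suggest: ``by an induction on the distance value $k>0$''). The base case concerns states at distance $1$: if $d_i(q)=1$, then by R1b (vacuously, since $d_i(q)=1$) together with R1a, the set $Q^{i+1}_1\setminus C$ alone satisfies $\trans(q,\wordletter{w}{i})$; I would then argue that no $r$ witnessing $q'\prec^i_C q$ can exist, because any such $r$ would need $\{r'\in C\cap Q^{i+1}_1 \mid r' \prec^{i+1}_C r\}\cup (Q^{i+1}_1\setminus C)\not\models\trans(q,\wordletter{w}{i})$ (Rule R1'b with the roles of $q,q'$ in that rule instantiated as $q',q$ here), contradicting $Q^{i+1}_1\setminus C\models\trans(q,\wordletter{w}{i})$ and monotonicity of positive Boolean formulas (adding states only preserves satisfaction). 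Hence states of distance $1$ are $\preceq^i_C$-minimal, which is exactly what is needed.

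For the inductive step, suppose $q\prec^i_C q'$ and let $d_i(q')=k>1$; I want $d_i(q)<k$. By Rule R1' there is $r\in C\cap Q^{i+1}_1$ with $\{r'\prec^{i+1}_C r\}\cup(Q^{i+1}_1\setminus C)\models\trans(q,\wordletter{w}{i})$ and $\not\models\trans(q',\wordletter{w}{i})$. By the induction hypothesis applied at level $i+1$ (where the relevant distances are smaller—this is where I need to make sure the induction is genuinely decreasing), $\{r'\in C\cap Q^{i+1}_1\mid r'\prec^{i+1}_C r\}\subseteq \{r'\in C\cap Q^{i+1}_1\mid d_{i+1}(r')<d_{i+1}(r)\}$, so $\{q\in C\cap Q^{i+1}_1\mid d_{i+1}(q)\le d_{i+1}(r)-1\}\cup(Q^{i+1}_1\setminus C)\models\trans(q,\wordletter{w}{i})$; by the definition of the distance function $d_i$ (the minimal escape time, Definition~\ref{def:consistent-distances} R1a with $d_i(q)$ minimal) this forces $d_i(q)\le d_{i+1}(r)+1$. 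Symmetrically, the non-satisfaction for $q'$ together with R1b forces $d_i(q')\ge d_{i+1}(r)+2 > d_i(q)$, giving the strict inequality. The $\backsimeq$ case is handled by noting $q\backsimeq^i_C q'$ means neither $q\prec q'$ nor $q'\prec q$, so running the above argument in both directions (no witness $r$ either way) pins the two distances to the same value via R1a/R1b.

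The main obstacle I anticipate is making the induction variable airtight: the witness $r$ at level $i+1$ need not have distance strictly smaller than $k$ in an obvious way, so I must be careful to phrase the induction on distance values \emph{across all levels simultaneously} (i.e., prove the statement for all $i$ at once, stratified by the value $k$), and to verify that whenever R1' is invoked the witnessing $r$ genuinely has $d_{i+1}(r)<k$—this follows because R1b for $q'$ at level $i$ gives $d_{i+1}(r)+2\le d_i(q')=k$ only \emph{after} we know the correspondence, so there is a mild circularity to break. I would break it by first establishing the $\preceq$-to-$\le$ direction in the weak form (using only R1a and monotonicity, which gives $d_i(q)\le d_{i+1}(r)+1$ unconditionally once we know $\{r'\prec^{i+1}_C r\}$ behaves like a distance-bounded set, and proceed by strong induction on $k$ so that the level-$(i{+}1)$ facts about states of distance $<k$ are already available). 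A secondary, purely bookkeeping, obstacle is keeping the variable names in R1' (which uses $q,q',r,r'$) distinct from those in the lemma statement; I would rename systematically. Finally, the accepting-SCC bullet is obtained verbatim by replacing $\A,\trans,R,Q^i_1\cap C$ with $\dual{\A},\dual{\trans},A,C\setminus Q^i_1$ and invoking R2, R2' in place of R1, R1', since Definition~\ref{def:consistent-distances} and Definition~\ref{def:consistent-preorders} are set up to be dual.
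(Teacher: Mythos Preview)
Your plan is essentially the paper's: induction on a distance value $k$, carried out simultaneously over all levels $i$, using R1/R1' (resp.\ R2/R2') to step down. The gap you yourself flag is real, though, and your proposed resolution does not close it. In the inductive step you take the R1'-witness $r\in C\cap Q^{i+1}_1$ and want to apply the induction hypothesis to pairs $(r',r)$ with $r'\prec^{i+1}_C r$; for that you need $d_{i+1}(r)<k$, and you never establish it. Saying ``proceed by strong induction so that level-$(i{+}1)$ facts about states of distance $<k$ are available'' is necessary but not sufficient: those facts only apply once you know $d_{i+1}(r)<k$, and your argument for that (``R1b for $q'$ gives $d_{i+1}(r)+2\le d_i(q')$'') already presupposes the reverse inclusion $\{r': d_{i+1}(r')\le d_{i+1}(r)-1\}\subseteq\{r':r'\prec^{i+1}_C r\}$, which is exactly the direction you do not yet have.

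The paper sidesteps this by proving the \emph{contrapositive}: it shows $d_i(q')<d_i(q)\Rightarrow q'\prec^i_C q$ (together with $d_i(q')=d_i(q)\Rightarrow q'\backsimeq^i_C q$), inducting on $k$ with $d_i(q')\le k$. Working from the distance side, the level-$(i{+}1)$ witness is chosen as a state $s$ with $d_{i+1}(s)=k$, whose existence comes directly from R1a applied to $q'$; its distance is known by construction, so the induction hypothesis applies immediately and gives the full ``iff'' between $\{\cdot\preceq^{i+1}_C s\}$ and $\{d_{i+1}(\cdot)\le k\}$. That is what breaks the circularity cleanly.

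Your direct approach can be salvaged, but it needs one extra step you did not write: from R1'b (non-satisfaction for $q'$) and R1a for $q'$ (satisfaction of $\trans(q',w[i])$ by $\{r':d_{i+1}(r')\le k-1\}\cup(Q^{i+1}_1\setminus C)$), monotonicity yields some $r''$ with $d_{i+1}(r'')\le k-1$ and $r''\not\prec^{i+1}_C r$, i.e.\ $r\preceq^{i+1}_C r''$; now the induction hypothesis at $d_{i+1}(r'')\le k-1$ gives $d_{i+1}(r)\le k-1$, after which the rest of your argument goes through (and in fact yields $d_i(q)\le d_{i+1}(r)\le k-1$, so your ``$+1$'' and the separate bound on $d_i(q')$ are unnecessary). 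Without this step the proof does not stand.
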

\begin{proof}
    We only prove the first claim; the proof of the second claim is entirely similar.

Let $C$ be a rejecting SCC with $q,q' \in C\cap Q^i_1$.
We prove the claim by an induction over the distance value $k \in \naturals$ for $d_i(q) \leq k$.
In order to prove that $q \preceq^i_C q'$ implies $d_i(q) \leq d_i(q')$, we can just prove its contraposition that $d_i(q') < d_i(q)$ implies $q' \prec^i_C q$ for all distance values $k \in\naturals$ with $d_i(q') \leq k$.
We prove $q\prec^i_C q'$ implies $d_i(q) < d_i(q')$ similarly.

For the \textbf{induction basis} ($k=1$), we have $d_i(q') \leq k$.
So $d_i(q') = 1$.
But then $Q^{i+1}_1\setminus C \models \trans(q',w[i])$.
Consequently, by Rule R1'b, $q'$ must be a minimal element of $\preceq^i_C$, and we have $q' \preceq^i_C q$.
Since by assumption that $d_i(q)> d_i(q') = 1$, Rule R1 supplies $Q^{i+1}_1\setminus C \not\models \trans(q,w[i])$.
We can therefore choose $r$ from Rule R1' as a minimal element of $\preceq^{i+1}_C$ to get $S^{i+1} = \setcond{r' \in C \cap Q^{i+1}_1}{r' \prec^{i+1}_C r} = \emptyset$.
It follows that $S^{i+1} \cup (Q^{i+1}_1\setminus C) \models \trans(q', \wordletter{w}{i})$ (R1'a) but $S^{i+1} \cup (Q^{i+1}_1\setminus C) \not\models \trans(q, \wordletter{w}{i})$ (R1'b).
By Definition~\ref{def:consistent-preorders}, we have $q' \prec^i_C q$.
Hence, for $k \in \naturals$ with $d_i(q') \leq k = 1$, it holds that $d_i(q') < d_i(q)$ implies $q' \prec^i_C q$.
When $d_i(q) = d_i(q') = k = 1$, it directly follows that $q \not\prec^i_C q'$ and $q' \not\prec^i_C q$ by Definition~\ref{def:consistent-preorders}, thus also $q' \simeq^i_C q$ since $\preceq^i_C$ is a total preorder.
Therefore, if $d_i(q') \leq d_i(q)$, then $q' \preceq^i_C q$, thus also $q\prec^i_C q'$ implies $d_i(q) < d_i(q')$.

For the \textbf{induction step} $k \mapsto k+1$, we have $d_i(q')=k+1$ and we want to prove $q' \prec^i_C q$ when $k + 1 = d_i(q') < d_i(q)$, and prove $q' \simeq^i_C q$ when $d_i(q') = d_i(q)$.
First, there must be a state $s \in C \cap Q^{i+1}_1$ with $d_{i+1}(s) = k$ according to R1a in Definition~\ref{def:consistent-distances}; We then pick such a state $s$.
By induction hypothesis, for all $p,p' \in C\cap Q^i_1 \wedge d_i(p) \leq k \wedge d_i(p')> d_i(p) $, we have that $ p \prec^i_C p'$.
Moreover, the following holds for every $s' \in C \cap Q^{i+1}_1$:
(1) $s' \preceq^{i+1}_C s$ iff $d_{i+1}(s') \leq k = d_{i+1}(s)$, and
    (2) $s' \prec^{i+1}_C s$ iff $d_{i+1}(s') < k = d_{i+1}(s)$.

Item (1) implies that $\setcond{s' \in C\cap Q^{i+1}_1 }{ s' \preceq^{i+1} s} = \setcond{s' \in C \cap Q^{i+1}_1}{ d_{i+1}(s') \leq d_{i+1}(s) = k}$, while Item (2) gives that $\setcond{s' \in C\cap Q^{i+1}_1 }{ s' \prec^{i+1} s} = \setcond{s' \in C \cap Q^{i+1}_1}{ d_{i+1}(s') \leq  d_{i+1}(s) - 1= k - 1}$.
Hence, by Definitions~\ref{def:consistent-distances} and~\ref{def:consistent-preorders}, we have that, for state $p \in C\cap Q^i_1$, $d_i(p) \leq k$ iff  $(Q^{i+1}_1\setminus C)\cup \setcond{s' \in C\cap Q^{i+1}_1}{ d_{i+1}(s') \leq d_i(p) - 1 \leq k-1} \models \trans(p, \wordletter{w}{i})$ (by R1 of Definition~\ref{def:consistent-distances}) iff $(Q^{i+1}_1\setminus C) \cup \{s' \in C \cap Q^{i+1}_1 \mid s' \prec^{i+1}_C s\} \models \trans(p,w[i])$ (by Item (2)).

Further, by applying R1 in Definition~\ref{def:consistent-distances},
it follows that, for state $p \in C\cap Q^i_1$,  $d_i(p) = k+1$ iff $(Q^{i+1}_1\setminus C)\cup \setcond{s' \in C\cap Q^{i+1}_1}{ d_{i+1}(s') \leq d_i(p) - 1 = k = d_{i+1}(s) = d_i(p) - 1} \models \trans(p, \wordletter{w}{i})$ (by R1a of Definition~\ref{def:consistent-distances}) iff $(Q^{i+1}_1\setminus C) \cup \{s' \in C\cap Q^{i+1}_1 \mid s' \preceq^{i+1}_C s\} \models \trans(p,w[i])$ (by Item (1), and we denote it as E1).
With this, Rule R1' implies that $p \in C \cap Q^i_1$ with $d_i(p) = k + 1$ must be a minimal element w.r.t.\ $\preceq^i_C$ in the set $\{p \in C \cap Q^i_1 \mid d_i(p) > k\}$;
Otherwise, there must be some $p \in C \cap Q^{i}_1$ with $p \prec^{i}_C q'$ and $d_i(p) \geq k +1 = d_i(q') > k$, and an element $r \in C\cap Q^{i+1}_1 $ satisfying R1'a and R1'b for $p$ and $q'$, violating the induction hypothesis and Definition~\ref{def:consistent-distances}.

Since we have $q' \in C \cap Q^i_1 \land d_i(q') = k + 1$ by assumption, $q'$ is also a minimal element w.r.t.\ $\preceq^i_C$ in the set $\{p \in C \cap Q^i_1 \mid d_i(p) > k\}$.
Obviously, $q' \preceq^i_C q$ holds because $d_i(q') < d_i(q)$.
Moreover, by assumption that $d_i(q)> d_i(q') = k+1$, then
we pick a state $r$ that is minimal w.r.t.\ $\preceq^{i+1}_C$ in the set $\{p \in C \cap Q^{i+1}_1 \mid d_i(p) > k\} $.
Recall that by induction hypothesis, for all $d_i(q') \leq k$, we have that $q' \prec^{i}_C q$ iff $d_{i}(q') < d_i(q) $ for all $i \in \naturals$.
Hence, $r \in \{p \in C \cap Q^{i+1}_1 \mid d_i(p) > k\} = \{p \in C \cap Q^{i+1}_1 \mid s \prec^{i+1}_C p\}$.
Together with $\{p \in C \cap Q^{i+1}_1 \mid p \preceq^{i+1}_C s\} \subseteq \{p \in C \cap Q^{i+1}_1 \mid p \prec^{i+1}_C r\}$ and (E1), $Q^{i+1}\setminus C \cup  \{p \in C \cap Q^{i+1}_1 \mid p \prec^{i+1}_C r\} \models \trans(q', \wordletter{w}{i})$ (R1'a).
Moreover, $Q^{i+1}\setminus C \cup  \{p \in C \cap Q^{i+1}_1 \mid p \prec^{i+1}_C r\} \not\models \trans(q, \wordletter{w}{i})$ ((R1'b) since by induction hypothesis, it is equivalent to $Q^{i+1}\setminus C \cup  \{p \in C \cap Q^{i+1}_1 \mid d_{i+1}(p) \leq d_i(r) - 1 = k \leq d_i(q) - 2 \} \not\models \trans(q, \wordletter{w}{i})$ (see Definition~\ref{def:consistent-distances}).
Then R1' implies $q' \prec^i_C q$.
Hence, we also have that $d_i(q') < d_i(q)$ implies that $q' \prec^i_C q$.

To prove that $q \prec^i_C q'$ implies $d_i(q) < d_i(q')$, we also prove its contraposition, i.e., $d_i(q') \leq d_i(q)$ implies  $q' \preceq^i_C q$ for all $i \in \naturals$.
We have already shown that $d_i(q') < d_i(q)$ implies  $q' \prec^i_C q$.
Moreover, if $d_i(q') = d_i(q) = k + 1$, then $q' \simeq^i_C q$, since both $q'$ and $q$ are minimal element w.r.t. $\preceq^i_C$ in the set $\{p \in C \cap Q^i_1 \mid d_i(p) > k\}$.
It then follows that $q \prec^i_C q'$ implies $d_i(q) < d_i(q')$.
Hence, we have completed the proof.
\end{proof}

By Lemma~\ref{lem:one-one-mapping}, for states $p, q \in H^i$, we have both $p \simeq^i_C q \Longleftrightarrow d_i(p) = d_i(q)$ and $p \prec^i_C q \Longleftrightarrow d_i(p) < d_i(q)$ hold for all $i \in \naturals$, where $H^i = C \cap Q^i_1$ if $C\subseteq R$ and $H^i = C \setminus Q^i_1$ if $C\subseteq A$.
Then Corollary~\ref{cor:unique-preorder} follows immediately from Lemma~\ref{lem:unique-distance}.
\begin{corollary}\label{cor:unique-preorder}
For each $w \in \infwords$, there is a unique consistent sequence of sets of states and total preorders $\po_w =  (Q^0_1, \{\preceq^{0}_C\}_{C\in \mathcal S})(Q^1_1,\{\preceq^{1}_C\}_{C\in \mathcal S})\cdots$ where $Q^0_1 Q^1_1 Q^2_1 \cdots$ is the unique sequence $\R_w$.
\end{corollary}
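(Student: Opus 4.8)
The plan is to obtain the corollary as an immediate consequence of the two results that precede it: Lemma~\ref{lem:unique-distance}, which gives the \emph{unique} consistent sequence of distance functions $\Phi_w = (Q^0_1, d_0)(Q^1_1, d_1)\cdots$ over $\R_w$, and Lemma~\ref{lem:one-one-mapping} together with Lemma~\ref{lem:exist-preorders}, which relate consistent sequences of total preorders over $\R_w$ to $\Phi_w$. The whole argument should be a short "sandwich" argument: existence comes from Lemma~\ref{lem:exist-preorders}, and uniqueness comes from pinning each preorder $\preceq^i_C$ down to the order induced by $d_i$.

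First I would invoke Lemma~\ref{lem:exist-preorders} to get that a consistent sequence $\po_w = (Q^0_1, \{\preceq^{0}_C\}_{C\in\mathcal S})(Q^1_1,\{\preceq^{1}_C\}_{C\in\mathcal S})\cdots$ over $\R_w$ exists; this discharges the existence half. For uniqueness, let $\po_w$ be \emph{any} consistent sequence of sets of states and total preorders whose underlying sequence of sets is $\R_w$ (note that $\R_w$ itself is forced by P1--P3, so there is no freedom in the $Q^i_1$ component). Fix $w$ and take $\Phi_w$ the unique consistent distance sequence from Lemma~\ref{lem:unique-distance}. Apply Lemma~\ref{lem:one-one-mapping} to $\po_w$ and $\Phi_w$: for each $i\in\naturals$ and each SCC $C\in\mathcal S$, and for all $p,q$ in the relevant domain $H^i$ (equal to $C\cap Q^i_1$ if $C\subseteq R$, and $C\setminus Q^i_1$ if $C\subseteq A$), we get $p\preceq^i_C q \Rightarrow d_i(p)\leq d_i(q)$ and $p\prec^i_C q \Rightarrow d_i(p)<d_i(q)$. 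Since $\preceq^i_C$ is a \emph{total} preorder on $H^i$, the converse implications follow: if $d_i(p)\leq d_i(q)$ then we cannot have $q\prec^i_C p$ (that would force $d_i(q)<d_i(p)$), so by totality $p\preceq^i_C q$; and similarly $d_i(p)<d_i(q)$ excludes $q\preceq^i_C p$, hence $p\prec^i_C q$. Consequently $p\preceq^i_C q \Longleftrightarrow d_i(p)\leq d_i(q)$, so $\preceq^i_C$ is \emph{completely determined} by $d_i$ — exactly the preorder described in the proof of Lemma~\ref{lem:exist-preorders}. As this holds for every $i$ and every $C$, the sequence $\po_w$ is uniquely determined by $\Phi_w$, which in turn is unique by Lemma~\ref{lem:unique-distance}. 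Combined with existence, this proves the corollary.

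I do not expect a genuine obstacle here, since the heavy lifting has already been done in Lemmas~\ref{lem:unique-distance} and~\ref{lem:one-one-mapping}; the only point that needs a line of care is the use of \emph{totality} of $\preceq^i_C$ to upgrade the one-directional implications of Lemma~\ref{lem:one-one-mapping} into the equivalences $p\preceq^i_C q \Leftrightarrow d_i(p)\le d_i(q)$ and $p\backsimeq^i_C q \Leftrightarrow d_i(p)=d_i(q)$. One should also note explicitly that the $\R_w$ component carries no choice (it is the unique sequence satisfying P1--P3 from Section~\ref{sec:unique-sequence}), so uniqueness of $\po_w$ really does reduce to uniqueness of the preorder families; the construction in Lemma~\ref{lem:exist-preorders} shows the derived object is consistent, closing the loop.
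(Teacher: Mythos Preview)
Your proposal is correct and matches the paper's approach: the paper likewise derives the corollary directly from Lemma~\ref{lem:one-one-mapping} (to pin each $\preceq^i_C$ to the order induced by $d_i$) and Lemma~\ref{lem:unique-distance} (uniqueness of $\Phi_w$), with existence already supplied by Lemma~\ref{lem:exist-preorders}. Your explicit use of totality to upgrade the one-directional implications of Lemma~\ref{lem:one-one-mapping} to equivalences is exactly the step the paper summarises in the sentence preceding the corollary.
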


In order to lift this unique set to an UBA, we need to discharge the correctness of the sequence $Q^0_1 Q^1_1Q^2_1\cdots$.
This is, however, a relatively simple task: for the correct sequence, the total preorders provide the same rational way of creating the same accepting runs on the tails $w[i\cdots]$ of $w$ from the states marked as accepting in $\A$ by inclusion in $Q^i_1$, or as accepting from $\widehat{\A}$ by non-inclusion in $Q^i_1$.

To prepare such a construction, we first define an arbitrary (but fixed) order on the SCCs of $\A$, as well as a $\nxt$ operator for cycling through SCCs, and fix an initial SCC $C_0 \in \mathcal S$.
Recall that $\mathcal{S}$ is the set of all SCCs in $\A$.
Note that we assume that the graph of $\mathcal A$ has at least one SCC. If this is not the case, we can simply build an unambiguous safety automaton that guesses $\R_w$.
Then, our construction of UBA is formalized below.

\begin{definition}\label{def:separated-ba}
        Let $\A = (\alphabet, \states,\iota, \trans, \acc)$ be an AWA. We define an NBA $\B_u = (\alphabet, Q_u, I_u, \trans_u, \acc_u)$ as follows.
    \begin{itemize}
        \item The states of $Q_u$ are tuples
        $(Q_1,Q_2,\{\preceq_C\}_{C \in \mathcal S},S,D)$
        such that
        \begin{itemize}
            \item $Q_1$ and $Q_2$ partition $Q$, i.e., $Q_2 = \states \setminus Q_1$
            \item for all $C\in \mathcal S$, if $C \subseteq R$ then $\preceq_C$ is a total preorder over $Q_1 \cap C$
            \item for all $C\in \mathcal S$, if $C \subseteq A$ then $\preceq_C$ is a total preorder over $Q_2 \cap C$
            \item $S \in \mathcal S$ is an SCC in the graph of $\mathcal A$
            \item $D$ is a downwards closed set w.r.t.\ the total preorder $\preceq_{S}$: if $q \in D$ then (1) $q \in Q_1 \cap S$ if $S \subseteq R$ resp.\ $q \in Q_2 \cap S$ if $S \subseteq A$, and
            (2) $q' \preceq_{S} q$ implies $q' \in D$,
        \end{itemize}

        \item $I_u = \setcond{(Q_1,Q_2,\{\preceq_C\}_{C \in \mathcal S},S,D)\in Q_u}{\iota \in Q_1, S=C_0, D=\emptyset}$,

        \item Let  $(Q_1,Q_2,\{\preceq_C\}_{C \in \mathcal S},S,D)$ be a macrostate in $Q_u$ and $\sigma \in \alphabet$.
        Then we have that \linebreak $(Q_1',Q_2',\{\preceq_C'\}_{C \in \mathcal S},S',D') \in \trans_u\big( (Q_1,Q_2,\{\preceq_C\}_{C \in \mathcal S},S,D), \sigma)$ if
        \begin{itemize}
            \item $Q'_1 \models \land_{s \in Q_1} \trans(s, \sigma)$ and $Q'_2 \models \land_{s \in Q_2} \dual{\trans}(s, \sigma)$
            \hfill \mbox{(local consistency)}
        \item for all $C \in \mathcal S$, $(Q_1,\preceq_C)$ and $(Q_1',\preceq_C')$
        satisfy the requirements of Rule R1' (if $C \subseteq R$) resp.\ Rule R2' (if $C \subseteq A$)
        \item if $D = \emptyset$, then
        $S' = \nxt(S)$ and $D' = Q_1' \cap S'$ if $S' \subseteq R$ resp.\ $D' = Q_2' \cap S'$ if $S' \subseteq A$,
        \item if $D \neq \emptyset$, then $S' = S$ and $D'$ is the smallest downwards closed set (see above) such that
        $D' \cup (Q_1'\setminus S) \models \land_{s \in D} \trans(s, \sigma)$ if $S \subseteq R$ resp.\  $D' \cup (Q_2'\setminus S) \models \land_{s \in D} \dual{\trans}(s, \sigma)$ if $S \subseteq A$,
    \end{itemize}
        \item $\acc_u = \setcond{(Q_1,Q_2,\{\preceq_C\}_{C \in \mathcal S},S,D)\in Q_u}{D=\emptyset}$.
    \end{itemize}
\end{definition}

The new construction uses $D$ as the breakpoint to ensure that the correct unique sequence $\R_w$ for each word $w$ is obtained.
The nondeterminism of the construction lies only in choosing $Q'_1$ (which entails $Q'_2$) and in updating the total preorders.
From an accepting macrorun of $\B_u$ over a word $w$, one can actually construct an accepting run DAG $\G_w$ of $\A$ by selecting successors in the next level for each state $q$ only the ones in the smallest downwards closed set $D$ satisfying $\trans(q, \sigma)$.
This way, all branches of $\G_w$ by construction will eventually get trapped in an accepting SCC, since $D$ will become empty infinitely often.
Hence, $\lang{\B_u} \subseteq \lang{\A}$.
Moreover, one can construct from the unique sequence of preorders $\Phi_w$ of a word $w \in \lang{\A}$ as given in Corollary~\ref{cor:unique-preorder} a unique infinite macrorun $\rho$ of $\B_u$.
Wrong guesses of the preorders for $\R_w$ will result in discontinued macroruns once a violation to R1' (or R2') has been detected.
Further, by Lemma~\ref{lem:one-one-mapping}, we have that $d_i(q) = d_i(q') \Leftrightarrow q \simeq^i_C q'$ and $d_i(q) < d_i(q') \Leftrightarrow q \prec^i_C q'$ for all $i \in \naturals$.
So, by Definition~\ref{def:consistent-distances} and Definition~\ref{def:consistent-preorders}, one can observe that, if $D^i \neq \emptyset$, $\sup\{d_i(q) \mid q \in D^i\} = \sup\{d_{i+1}(q) \mid q \in D^{i+1}\} +1$ (choosing $\sup \emptyset = 0$), where $D^i$ is the $D$-component of macrostate $\wordletter{\rho}{i}$ with $i \in \naturals$.
Since for every nonempty $D^i$, $\sup\{d_i(q) \mid q \in D^i\}$ is finite and the maximal value in $D^i$ is always decreasing, the value will eventually become $0$, i.e., $D$ always becomes empty eventually.
That is, $\rho$ must be accepting.
Hence, Theorem~\ref{thm:correctness-new-construction} follows;
See Appendix~\ref{app:new-construction-correctness} for more details.

\begin{restatable}{theorem}{newConstruction}\label{thm:correctness-new-construction}
\label{theo:Bu}
Let $\B_u$ be defined as in Definition \ref{def:separated-ba}.
Then (1) $\lang{\B_u} = \lang{\A}$,
(2) $\B_u$ is unambiguous.
\end{restatable}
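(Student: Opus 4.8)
The plan is to establish the three assertions in the order $\lang{\B_u}\subseteq\lang{\A}$ (soundness), $\lang{\A}\subseteq\lang{\B_u}$, and unambiguity, settling the latter two together by exhibiting, for each $w\in\lang{\A}$, \emph{exactly one} accepting macrorun of $\B_u$. Throughout I would use that $\trans_u$ already builds in the local consistency constraints L2, L3 and the preorder rules R1'/R2', that the only residual nondeterminism sits in the choice of $Q_1'$ and of the updated preorders, and that the $(S,D)$-component is then \emph{forced} deterministically---the ``smallest downwards-closed set $D'$'' being well defined because the downwards-closed subsets of a total preorder form a chain under inclusion and $Q_1'\cap S$ (resp.\ $Q_2'\cap S$) always satisfies the defining condition.

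\medskip\noindent\textbf{Soundness.}
Given an accepting macrorun $\rho=m_0m_1\cdots$ over $w$ with $m_i=(Q^i_1,Q^i_2,\{\preceq^i_C\}_{C\in\mathcal S},S^i,D^i)$, the first---and crucial---step would be to show that $Q^0_1Q^1_1\cdots$ is exactly $\R_w$, i.e.\ $Q^i_1=Q_{\wordletter{w}{i\cdots}}$ for all $i$. Local consistency already yields L2 and L3, so it remains to exclude states that are locally legal but semantically wrong, which is precisely the job of the breakpoint $D$: since $\rho$ is accepting, $D^i=\emptyset$ infinitely often, so by the $\nxt$-cycling every SCC becomes the active component infinitely often and every such breakpoint round terminates. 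I would argue that a state $s\in C\cap Q^j_1$ in a rejecting SCC $C$ all of whose run DAGs over $\wordletter{w}{j\cdots}$ keep a branch trapped in a rejecting SCC would, by forward propagation of L2 and by following that branch down the finite acyclic SCC-order, leave an obligation inside some rejecting SCC whose $D$-round could then never empty---contradicting acceptance---and run the mirror argument on $\dual{\A}$ and $Q_2$ to exclude states wrongly \emph{missing} from $Q^i_1$; together with $\iota\in Q^0_1$ this pins the sequence to $\R_w$. By Corollary~\ref{cor:unique-preorder} the preorders of $\rho$ are then the unique ones, and by Lemma~\ref{lem:one-one-mapping} they coincide with the distance functions $d_i$ of Lemma~\ref{lem:unique-distance}. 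Finally I would build a run DAG $\G_w$ of $\A^\iota$ over $w$ by choosing, for $\vertex{q}{i}$ with $q$ in an accepting SCC, all of $Q^{i+1}_1$ as successors, and for $q$ in a rejecting SCC $C$, the set $(Q^{i+1}_1\setminus C)\cup\{q'\in C\cap Q^{i+1}_1\mid d_{i+1}(q')\le d_i(q)-1\}$, which satisfies $\trans(q,\wordletter{w}{i})$ by rule R1a of Definition~\ref{def:consistent-distances}; any $\omega$-branch eventually stays inside a single SCC, and were it rejecting the distances along it would form a strictly decreasing infinite sequence in $\naturals^{>0}$---impossible---so the branch is trapped in an accepting SCC and visits accepting vertices infinitely often. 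Hence $\G_w$ is accepting and $w\in\lang{\A}$.

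\medskip\noindent\textbf{Completeness and unambiguity.}
For $w\in\lang{\A}$ I would take the unique consistent sequence $\po_w=(Q^0_1,\{\preceq^0_C\})(Q^1_1,\{\preceq^1_C\})\cdots$ of Corollary~\ref{cor:unique-preorder} (noting $\iota\in Q^0_1$ since $w\in\lang{\A^\iota}$), read the $Q_1,Q_2=\states\setminus Q_1,\{\preceq_C\}$ components off $\po_w$, and let the $(S,D)$-component evolve deterministically from $(C_0,\emptyset)$. This is a legal infinite macrorun $\rho$: L2 and L3 hold because the $Q_1$-sequence is $\R_w$ (Section~\ref{sec:unique-sequence}), R1'/R2' hold by consistency of $\po_w$, and the smallest $D'$ always exists as noted. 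Translating the $\preceq^i_C$ back to the $d_i$ via Lemma~\ref{lem:one-one-mapping}, I would check that inside a breakpoint round for a rejecting SCC $C$ one gets $D^{i+1}=\{q\in C\cap Q^{i+1}_1\mid d_{i+1}(q)\le\sup\{d_i(s):s\in D^i\}-1\}$ (and symmetrically, via $\dual{\A}$, for accepting SCCs), hence $\sup\{d_i(q):q\in D^i\}=\sup\{d_{i+1}(q):q\in D^{i+1}\}+1$ with $\sup\emptyset=0$; so every round terminates, $D^i=\emptyset$ infinitely often, $\rho$ is accepting, and $w\in\lang{\B_u}$. For uniqueness: any accepting macrorun over $w$ has $Q_1$-sequence $\R_w$ (soundness), hence preorders $\{\preceq^i_C\}$ (Corollary~\ref{cor:unique-preorder}), hence the forced $(S,D)$-component, so $\rho$ is the only one; and for $w\notin\lang{\A}$ soundness gives no accepting macrorun at all. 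Thus $\B_u$ is unambiguous and $\lang{\B_u}=\lang{\A}$.

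\medskip\noindent\textbf{Expected main obstacle.}
The technical heart---and the step I expect to be hardest---is the first half of the soundness argument: certifying that an accepting macrorun's $Q_1$-sequence is forced to be the semantically correct $\R_w$ rather than a merely locally consistent sequence reconciled by a degenerate family of preorders (say everything in a single class at every level, so that R1' and R2' become vacuous). The subtlety is that a single wrongly included state need not spoil the breakpoint round of its own SCC immediately---its rejecting branch may first drift through several SCCs---so the argument must trace obligations along the finite acyclic SCC-ordering until they become confined to one rejecting SCC whose $D$-round is then provably non-terminating, and must run the dual bookkeeping (on $\dual{\A}$ and $Q_2$) in lockstep. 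Once correctness of $\R_w$ is in hand, the rest---identifying the ``smallest downwards-closed set'' with ``distance $\le\sup-1$'' and reading off the $\sup$-decrement---should be a routine unwinding of Definitions~\ref{def:consistent-distances} and~\ref{def:consistent-preorders} together with Lemma~\ref{lem:one-one-mapping}.
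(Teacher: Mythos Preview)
Your completeness and unambiguity arguments match the paper's essentially verbatim, including the $\sup$-decrement observation $\sup\{d_i(q):q\in D^i\}=\sup\{d_{i+1}(q):q\in D^{i+1}\}+1$ for termination of each breakpoint round. The divergence is in the soundness direction, where you take a detour the paper avoids.

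You plan to first establish $Q_1^0Q_1^1\cdots=\R_w$ by contradiction (a wrongly included state eventually pins a non-terminating $D$-round), then invoke Corollary~\ref{cor:unique-preorder} and Lemma~\ref{lem:one-one-mapping} to recover the distance functions, and finally build an accepting run DAG from those distances. The paper instead constructs an accepting run DAG of $\A^q$ for each $q\in Q_1^i$ \emph{directly from the preorders carried by the macrorun}, without first knowing whether they are the ``semantically correct'' ones: for $q'$ in a rejecting SCC $C$ at level $j$, choose as successors the $\preceq_C^{j+1}$-minimal downward-closed $D'\subseteq C\cap Q_1^{j+1}$ with $D'\cup(Q_1^{j+1}\setminus C)\models\trans(q',w[j])$. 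Because this per-state rule is exactly the shape of the $D$-update, once the round-robin selects $C$ (so $D=Q_1\cap C$) every branch still inside $C$ is trapped within the breakpoint set and therefore leaves $C$ by the time $D$ next empties. This single construction yields $w[i\cdots]\in\lang{\A^q}$ (and the dual for $Q_2$), giving both $\lang{\B_u}\subseteq\lang{\A}$ and $\R_w$-correctness in one stroke; Corollary~\ref{cor:unique-preorder} is then invoked only for unambiguity, and distances are not used for soundness at all.

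This is precisely what dissolves the obstacle you flag. Your worry---that degenerate preorders make R1'/R2' vacuous, and that the trapped branch may drift through several SCCs or ``escape'' into wrong states of $Q_1'\setminus C$ so that the $D$-round terminates anyway---is real for the contrapositive argument you sketch: the $D$-update depends on $Q_1'$, which may contain further wrong states, so showing that \emph{some} round must hang requires more bookkeeping than ``follow the branch down the SCC order''. The paper never argues that a wrong state blocks a round; it argues that a terminating round manufactures a witness that every state in $Q_1$ is right. Your route can be made rigorous, but carrying it out amounts to reproving the paper's construction in contrapositive form, and your final run-DAG-from-distances step is then redundant, since $\iota\in Q_1^0=Q_w$ already gives $w\in\lang{\A}$ the moment $\R_w$-correctness is in hand.
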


\section{Improvements and Complexity}
\label{sec:improved-construction}
When revisiting the construction in search for improvements, it seems wasteful to keep total preorders for all SCCs in the graph of $\A$, given that they are not interacting with each other.
Can we focus on just one at a time?
It proves to be possible with Definition~\ref{def:separated-ba}.

\begin{definition}\label{def:separated-ba}
        Let $\A = (\alphabet, \states,\iota, \trans, \acc)$ be an AWA. We define an NBA $\U = (\alphabet, Q_u, I_u, \trans_u, \acc_u)$ as follows.
    \begin{itemize}
        \item The states of $Q_u$ are tuples
        $(Q_1,Q_2,\preceq_C,C,D)$
        such that
        \begin{itemize}
            \item $Q_1$ and $Q_2$ partition $Q$
            \item $C$ is an SCC in the graph of $\mathcal A$ and
            \begin{itemize}
                \item if $C \subseteq R$ then $\preceq_C$ is a total preorder of $Q_1 \cap C$
                \item if $C \subseteq A$ then $\preceq_C$ is a total preorder of $Q_2 \cap C$
            \end{itemize}
            \item let $M$ be the set of maximal elements of the total preorder $\preceq_C$, and let $D'=C \cap Q_1$ if $C \subseteq R$ resp.\ $D'=C \cap Q_2$ if $C \subseteq A$; then
            $D = D'$ or $D = D' \setminus M$
        \end{itemize}

        \item $I_u = \setcond{(Q_1,Q_2,\preceq_C,C,D)\in Q_u}{\iota \in Q_1, C=C_0, D=\emptyset}$,

        \item Let  $(Q_1,Q_2,\preceq_C,C,D)$ be a macrostate in $Q_u$ and $\sigma \in \alphabet$.
        Then we have that \linebreak $(Q_1',Q_2',\preceq_{C'}',C',D') \in \trans_u\big( (Q_1,Q_2,\preceq_C,C,D), \sigma)$ if
        \begin{itemize}
            \item $Q'_1 \models \land_{s \in Q_1} \trans(s, \sigma)$ and $Q'_2 \models \land_{s \in Q_2} \dual{\trans}(s, \sigma)$
            \hfill \mbox{(local consistency)}
        \item if $D = \emptyset$, then
        $C' = \nxt(C)$ and $D' = Q_1' \cap C'$ if $C' \subseteq R$ resp.\ $D' = Q_2' \cap C'$ if $C' \subseteq A$,
        \item if $D \neq \emptyset$ then $C' = C$,
        \begin{itemize}

            \item for all $C \in \mathcal S$, $(Q_1,\preceq_C)$ and $(Q_1',\preceq_C')$
            satisfy the requirements of Rule R1' (if $C \subseteq R$) resp.\ Rule R2' (if $C \subseteq A$) and

            \item $D'$ is the smallest downward closed set w.r.t.\ $\preceq_{C}'$ such that%
\footnote{Note that this is a deterministic assignment that does not necessarily lead to a set $D'$ that covers all of $\preceq_{C}'$ or all of $\preceq_{C}'$ except for the maximal elements; if it does not, then this transition is disallowed}
        $D' \cup (Q_1'\setminus C) \models \land_{s \in D} \trans(s, \sigma)$ if $C \subseteq R$ resp.\  $D' \cup (Q_2'\setminus C) \models \land_{s \in D} \dual{\trans}(s, \sigma)$ if $C \subseteq A$,

        \end{itemize}
    \end{itemize}
        \item $\acc_u = \setcond{(Q_1,Q_2,\preceq_C,C,D)\in Q_u}{D=\emptyset}$.
    \end{itemize}
\end{definition}
The nondeterminism of the construction again lies in choosing $Q'_1$ (which entails $Q'_2$) and in updating the total preorder.
One can also construct from an accepting macrorun of $\U$ over $w$ an accepting run DAG $\G_w$ of $\A$, using the same way as we did for Theorem~\ref{thm:correctness-new-construction}.
So, $\lang{\U} \subseteq \lang{\A}$.
For the other direction, we first observe that the preorders of \emph{every} accepting macrorun $(Q_1^0,Q_2^0,\preceq_0,S^0,D^0)
(Q_1^1,Q_2^1,\preceq_1,S^1,D^1)\cdots$ of $\U$ over $w$ can be tightly related with the distance values of states defined in $\textbf{d}$.
More precisely, let $D^{i'} = D^{i} =\emptyset$ with $i' < i$ being two consecutive accepting positions.
Then for all $j \in (i', i]$, we have that:
\begin{enumerate}
    \item for all $q \in D^j$ and all $ q' \in C^i \cap Q_1^j.\ d_j(q)\leq d_j(q') \Leftrightarrow q \preceq_j q'$, and $d_j(q)\leq i-j$ hold,
    \item for all $q \in C^i \cap Q_1^j$ and all $ q' \in M^j=(C^i \cap Q_1^j) \setminus D^j.\  q \preceq_j q'$ and $d_j(q') > i-j$ hold, and
    \item $m_j = \sup\{d_j(q) \mid q \in D^j\} = i-j$, using $\sup\emptyset = 0$,
\end{enumerate}

\noindent where $C^i \subseteq R$ is a rejecting SCC of $\A$.
Note that $C^j = C^i$ for all $i' < j  \leq i$.
The case for $C^i\subseteq A$ can be defined similarly.
Let $m_j = \sup\{d_j(q) \mid q \in D^j\}$.
The intuition is that all states in $M^j = (C^i \cap Q^j_1)\setminus D^j = \setcond{s \in C^i \cap Q^j_1}{d_j(s) > m_i}$ are aggregated by construction as the maximal elements w.r.t. $\preceq_j$, while $\preceq_j$ orders all states in $D^j = \setcond{s \in C^i\cap Q^j_1}{d_j(s) \leq m_j}$ exactly as in the preorders of Corollary~\ref{cor:unique-preorder}.
So, the correspondence between $d_j$ and $\preceq_j$ in the three items then follows naturally.
For technical reasons, if $q\in D^j$ or $q' \in (C^i \cap Q^j_1)\setminus D^j$ do not exist in above items, we say the item above still holds.
See Appendix~\ref{app:improved-construction} for proof details.

In fact, one can construct such an accepting macrorun satisfying the three items above for $\U$ by simulating $\B_u$ as follows.
If
$\rho = (Q_1^0,Q_2^0,\{\preceq_C^0\}_{C \in \mathcal S},S^0,D^0)
(Q_1^1,Q_2^1,\{\preceq_C^1\}_{C \in \mathcal S},S^1,$ $D^1)
(Q_1^2,Q_2^2,\{\preceq_C^2\}_{C \in \mathcal S},S^2,D^2)\cdots$ is the accepting macrorun of $\B_u$ on a word $w$, then $\U$ has an accepting macrorun
$\dual{\rho} = (Q_1^0,Q_2^0,\preceq_0,S^0,D^0)
(Q_1^1,Q_2^1,\preceq_1,S^1,D^1)
(Q_1^2,Q_2^2,\preceq_2,S^2,D^2)\cdots$ (that differs from $\rho$ only in preorders), such that
\begin{itemize}
    \item if $S^i \subseteq R$, then $\preceq_i$ is a total preorder on $S^i \cap Q_1^i$ where
        $\preceq_i = \preceq_{S^i}^i$ if $D^i=S^i \cap Q_1^i$ and
        otherwise, the maximal elements $M^i$ of $\preceq_i$ are $(S^i \cap Q_1^i) \setminus D^i$, and the restriction of $\preceq_i$ to $D^i \times D^i$ agrees with the restriction of $\preceq_{S^i}^i$ to $D^i \times D^i$, and
    \item similarly, if $S^i \subseteq A$, then $\preceq_i$ is a total preorder on $S^i \cap Q_2^i$ where
        $\preceq_i = \preceq_{S^i}^i$ if $D^i=S^i \cap Q_2^i$ and
        otherwise, the maximal elements $M^i$ of $\preceq_i$ are $(S^i \cap Q_2^i) \setminus D^i$, and the restriction of $\preceq_i$ to $D^i \times D^i$ agrees with the restriction of $\preceq_{S^i}^i$ to $D^i \times D^i$.
\end{itemize}

It is easy to verify that $\dual{\rho}$ satisfies all local constraints for Rule R1' resp.\ R2'.
Hence, $\lang{\A} = \lang{\B_u} \subseteq \lang{\U}$, thus also $\lang{\U} = \lang{\A}$.
One can show that $\dual{\rho}$ is the sole accepting macrorun of $\U$ over $w$ by the following facts.
(i) There is only a single initial macrostate that fits $\R_w$, and when we take a transition from an accepting macrostate (including the first), the next SCC is deterministically selected;
(ii) Moreover, all relevant states from this SCC are in the $D^i$ component and $m_i = \sup\{d_i(q) \mid q \in D^i\}$ is the distance to the next breakpoint (by Item (3) above), and thus the $\preceq_i$ and $D^i$ up to it.
With a simple inductive argument we can thus conclude that $\dual{\rho}$ is the only such accepting macrorun.
Then, Theorem~\ref{thm:correctness-improved-construction} follows.
\begin{theorem}\label{thm:correctness-improved-construction}
Let $\U$ be defined as in Definition \ref{def:separated-ba}.
Then (1) $\lang{\U} = \lang{\A}$
and
(2) $\U$ is unambiguous.
\end{theorem}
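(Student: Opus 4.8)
The plan is to prove the two inclusions of~(1) separately and then to derive~(2) from a tight correspondence between the accepting macroruns of $\U$, those of $\B_u$, and the unique consistent sequence of total preorders of Corollary~\ref{cor:unique-preorder}. The inclusion $\lang{\U}\subseteq\lang{\A}$ is obtained as for Theorem~\ref{thm:correctness-new-construction}: from an accepting macrorun $(Q_1^0,Q_2^0,\preceq_0,C^0,D^0)(Q_1^1,Q_2^1,\preceq_1,C^1,D^1)\cdots$ of $\U$ over $w$ I would build a run DAG $\G_w$ of $\A$ level by level, letting each live state $q$ at level $i$ choose its successors to be the smallest downwards closed set used in the definition of $D^{i+1}$ when $q\in D^i$, and an arbitrary local witness of $\trans(q,\wordletter{w}{i})$ otherwise. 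Local consistency makes this a legal run DAG; and since an accepting macrorun has $D^i=\emptyset$ infinitely often---each time re-covering the freshly selected SCC $C^{i+1}=\nxt(C^i)$, which must then be emptied again---every branch of $\G_w$ is eventually trapped in an accepting SCC, so $\G_w$ is accepting and $w\in\lang{\A}$. Here I would also record the point that, because distinct SCCs do not interact and $\nxt$ cycles through all of them, tracking a single SCC at a time still discharges every rejecting-SCC obligation infinitely often.

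For $\lang{\A}\subseteq\lang{\U}$, given $w\in\lang{\A}$ I would take an accepting macrorun $\rho$ of $\B_u$ over $w$ (Theorem~\ref{thm:correctness-new-construction}), whose $Q_1$-sequence is $\R_w$ and whose preorder families are the unique ones of Corollary~\ref{cor:unique-preorder}, and project it to a candidate macrorun $\dual{\rho}$ of $\U$ by keeping the $Q_1,Q_2,D$ components and the tracked SCC unchanged, retaining from the preorder family of $\rho$ only the preorder of that tracked SCC, and collapsing its states outside $D^i$ into a single flat maximal layer $M^i$, as described just before the theorem. The work is to check that $\dual{\rho}$ is a legal macrorun of $\U$: local consistency is inherited from $\rho$; Rules~R1'/R2' on the active part hold because, by Lemma~\ref{lem:one-one-mapping}, within each phase $\preceq_i$ restricted to $D^i$ orders states exactly by their distances $d_i$, while $M^i=(C^i\cap Q_1^i)\setminus D^i$ is precisely the set of states of distance $>m_i:=\sup\{d_i(q)\mid q\in D^i\}$; and the deterministic ``smallest downwards closed $D'$'' assignment of Definition~\ref{def:separated-ba} yields exactly the $D^{i+1}$ coming from $\rho$, thanks to the three correspondence items $q\in D^j\Leftrightarrow d_j(q)\le m_j$, $m_j=i-j$ inside a phase $(i',i]$ between consecutive breakpoints, and $q\preceq_j q'\Leftrightarrow d_j(q)\le d_j(q')$. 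Acceptance of $\dual{\rho}$ is immediate since its $D$-components coincide with those of the accepting $\rho$, so $w\in\lang{\U}$ and $\lang{\U}=\lang{\A}$.

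For~(2) I would show that $\dual{\rho}$ is the \emph{only} accepting macrorun of $\U$ over $w$. Along any accepting macrorun $D$ is empty infinitely often, so the $D$-component acts as a breakpoint that re-covers each tracked SCC and forces it to be emptied; together with local consistency this pins $Q_1^i=Q_{\wordletter{w}{i\cdots}}$, i.e.\ the $Q_1$-sequence must be $\R_w$, exactly as argued in Section~\ref{sec:unique-sequence}. Consequently the initial macrostate is forced ($Q_1^0$ fixed by $\R_w$, tracked SCC $=C_0$, $D^0=\emptyset$), and every transition out of an accepting macrostate forces the next SCC to be $\nxt(\cdot)$. Within a phase $(i',i]$, Item~(3) tells us $m_j=i-j$ is the remaining distance to the next breakpoint, so the shrinking chain $D^{i'+1}\supseteq D^{i'+2}\supseteq\cdots$ and the preorders $\preceq_j$ on those sets are determined by the unique distance function $d_j$ via Lemma~\ref{lem:one-one-mapping}, while the top layer is forced to be $M^j=(C^i\cap Q_1^j)\setminus D^j$; a straightforward induction on phases then yields that $\dual{\rho}$ is the unique accepting macrorun, so $\U$ is unambiguous.

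The step I expect to be the main obstacle is establishing the three correspondence items used in the $\lang{\A}\subseteq\lang{\U}$ direction---in particular that within a phase the deterministic $D'$-update never ``fails'' (the caveat flagged in the footnote of Definition~\ref{def:separated-ba}), and that $D^j$ equals $\{q\in C^i\cap Q_1^j\mid d_j(q)\le i-j\}$ with $M^j$ its complement in $C^i\cap Q_1^j$. This requires a careful simultaneous induction linking the deterministic choice of $D'$, the preorders maintained by $\U$, and the distance functions of $\Phi_w$; and since the same correspondence underlies the uniqueness argument in~(2), it is the real crux of the whole theorem.
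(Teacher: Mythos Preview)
Your plan is correct and mirrors the paper's own proof: inclusion $\lang{\U}\subseteq\lang{\A}$ via the run-DAG construction as in Theorem~\ref{thm:correctness-new-construction}, inclusion $\lang{\A}\subseteq\lang{\U}$ by projecting the unique accepting macrorun of $\B_u$ to a macrorun $\dual{\rho}$ of $\U$ with the non-$D$ states collapsed into a single maximal layer, and unambiguity via the three correspondence items linking $\preceq_j$, $D^j$, and the distances $d_j$. The one point worth flagging is that the paper establishes those three items for an \emph{arbitrary} accepting macrorun of $\U$ by a \emph{backwards} induction inside each phase (base case $j=i$ where $D^i=\emptyset$, step $j\mapsto j-1$), which is the natural direction here since the anchor is the breakpoint and Lemma~\ref{lem:one-one-mapping} does not apply verbatim to $\U$'s single-SCC preorders; you correctly identify this induction as the crux.
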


We now turn to the complexity of our constructions.
Let $\tpo(n)$ denote the number of total partial orders over a set with $n$ states.
$\tpo(n) \approx \frac{n!}{2(\ln 2)^{n+1}}$ \cite{BARTHELEMY1980311}, so that we get $\lim_{n \rightarrow \infty} \frac{\sqrt[n]{\tpo(n)}}{n} = \frac{1}{e\ln 2} \approx 0.53$.

\begin{theorem}
If $\mathcal A$ has $n$ states, then the numbers of states of $\U$ and $\B_u$ are $\bigO\big(\tpo(n)\big)$ and $\bigO\big(n\cdot\tpo(n)\big)$, respectively.
\end{theorem}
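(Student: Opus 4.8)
The plan is to bound the number of tuples in each construction by counting the possible components. For $\U = (\alphabet, Q_u, I_u, \trans_u, \acc_u)$, a macrostate is a tuple $(Q_1, Q_2, \preceq_C, C, D)$. Since $Q_2 = \states \setminus Q_1$ is determined by $Q_1$, the pair $(Q_1, Q_2)$ contributes a factor of at most $2^n$ choices. The SCC $C$ is one of at most $n$ SCCs in the graph of $\mathcal A$. The key observation is that the pair $(\preceq_C, D)$ is what makes the count nontrivial: $\preceq_C$ is a total preorder on a set of at most $n$ elements (namely $Q_1 \cap C$ or $Q_2 \cap C$), and $D$ is either $D'$ (the whole relevant set) or $D' \setminus M$ (the relevant set minus the maximal elements of $\preceq_C$), so $D$ contributes only a factor of $2$ once $\preceq_C$ is fixed. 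So naively this gives $\bigO(2^n \cdot n \cdot 2 \cdot \tpo(n))$, which is not yet the claimed bound.

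The main step, then, is to argue that the $2^n$ factor for $Q_1$ and the $\tpo(n)$ factor for the preorder do not multiply independently, but rather combine into a single $\bigO(\tpo(n))$ term. The idea I would use is the following: a total preorder on the full set $\states$ of $n$ states already encodes, implicitly, a partition of $\states$ into equivalence classes together with a linear order on them; one can extend the data $(Q_1, \preceq_C)$ into a single total preorder on all of $\states$ (or on $C$, padding the rest) by placing the states outside the relevant set appropriately — for instance, treating $Q_1 \setminus C$ and $Q_2 \cap C$ (the "already escaped / irrelevant" states) as sitting in fixed extremal classes. More carefully, since $C$ is fixed and $\preceq_C$ is a preorder on $Q_1 \cap C$ (when $C \subseteq R$), the choice of $Q_1 \cap C$ together with $\preceq_C$ can be encoded as a total preorder on $C$: reserve one bottom class for $C \setminus Q_1$, then list the classes of $\preceq_C$ above it. This injects the pair $(Q_1 \cap C, \preceq_C)$ into the set of total preorders on $C$, of which there are at most $\tpo(n)$. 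The remaining bits — which states outside $C$ are in $Q_1$ — must still be counted; here I would exploit that weakness means each SCC outside $C$ is either entirely accepting or entirely rejecting, but more to the point, the claim as stated is $\bigO(\tpo(n))$, so I would need to check whether the authors intend the bound to absorb the $2^{n}$ from the partition of states outside $C$ into the asymptotics of $\tpo(n)$ itself — which it can, since $2^n = o(\tpo(n)/2^n)$ fails, so in fact the honest statement must be that the relevant information is jointly a preorder-like object on $\le n$ elements. The cleanest route: observe that the reachable macrostates of $\U$ are exactly those consistent with some $\R_w$, and for such macrostates $(Q_1, \preceq_C)$ is determined by a total preorder on a subset of $\states$ of size $\le n$ after encoding $Q_1$-membership into the preorder structure; hence at most $\tpo(n)$ choices, times $n$ for the SCC identity $C$, times $2$ for $D$, giving $\bigO(n \cdot \tpo(n))$ — and then absorbing the $n$ by noting $n \cdot \tpo(n) = \bigO(\tpo(n+1))$ or simply restating, or by a sharper encoding that folds $C$ into the preorder too.

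For $\B_u$, the macrostate is $(Q_1, Q_2, \{\preceq_C\}_{C \in \mathcal S}, S, D)$, carrying a preorder for \emph{every} SCC simultaneously. Here the key counting fact is that the SCCs partition $\states$, so $\sum_{C \in \mathcal S} |C| \le n$; since the number of total preorders on a set of size $k$ is $\tpo(k)$ and $\tpo$ is log-convex / super-multiplicative in the sense that $\prod_i \tpo(k_i) \le \tpo(\sum_i k_i)$ when the $k_i$ sum to $n$ (this follows from $\tpo(a)\tpo(b) \le \tpo(a+b)$, which is the interleaving-of-ordered-partitions inequality), the product $\prod_{C} \tpo(|C \cap Q_j|)$ over all SCCs is at most $\tpo(n)$. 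Then $S$ contributes a factor $n$, $D$ contributes a factor $2$ (downwards closed in $\preceq_S$, so again essentially one of two choices relative to the maximal elements — or, being generous, $|S|+1 \le n+1$ choices of cut), $Q_1$ vs.\ $Q_2$ again folds into the preorder data as above, and we arrive at $\bigO(n \cdot \tpo(n))$ for $\B_u$. The main obstacle I anticipate is precisely establishing the inequality $\prod_{C \in \mathcal S}\tpo(|C|) \le \tpo(n)$ cleanly and, more delicately, handling how $Q_1$-membership is absorbed without an extra $2^n$ blow-up; once $\tpo(a)\tpo(b)\le\tpo(a+b)$ is in hand (via an explicit injection merging two ordered set-partitions into one, or via the asymptotic formula $\tpo(n)\approx n!/(2(\ln 2)^{n+1})$ and $\binom{a+b}{a}$-type bookkeeping), the rest is routine bookkeeping and absorbing polynomial factors into the $\bigO$.
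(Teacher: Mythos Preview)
Your central encoding --- placing the complement $C\setminus Q_1$ (resp.\ $C\cap Q_1$) as an extremal equivalence class so that the pair $(Q_1\cap C,\preceq_C)$ becomes a single total preorder on $C$ --- is exactly the paper's device. Where you diverge is in handling multiple SCCs. The paper does not attempt your general analysis at all: it simply asserts that the worst case arises when all of $Q$ lies in one SCC and analyses only that case, obtaining at most $4\cdot\tpo(n)$ macrostates for $\U$ (two bits of ambiguity: whether the minimal class of the encoding preorder is $C\setminus Q_1$ or already the first class of $\preceq_C$, and whether $D$ drops the maximal class) and an additional $O(n)$ factor for the downward-closed $D$ in $\B_u$. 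You instead try to bound the multi-SCC case directly via super-multiplicativity of $\tpo$; this is a more honest route, but as you already flag, your execution leaves gaps.

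For $\B_u$, the inequality $\tpo(a)\tpo(b)\le\tpo(a+b)$ you propose is not quite enough: folding $Q_1$-membership into each per-SCC preorder costs a factor $2$ \emph{per SCC}, so you actually need the stronger $2\,\tpo(a)\tpo(b)\le\tpo(a+b)$ for $a,b\ge 1$ (which holds because an ordered partition of $A$ into $p$ classes and one of $B$ into $q$ classes can be interleaved in $\binom{p+q}{p}\ge 2$ ways). Iterating gives $\prod_C 2\,\tpo(|C|)\le 2\,\tpo(n)$, and then $\sum_{S\in\mathcal S}(|S|+1)\le 2n$ for the joint choice of $(S,D)$ recovers $O(n\cdot\tpo(n))$. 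For $\U$, your argument stalls at $O(n\cdot\tpo(n))$; appealing to $\tpo(n+1)$ does not recover the stated $O(\tpo(n))$, and the ``reachable macrostates'' idea does nothing here. The clean fix is to push your own encoding one step further: put the states outside $C$ into two additional extremal classes (one for $Q_1\setminus C$, one for $Q_2\setminus C$), obtaining a single total preorder on all of $Q$ from which $(Q_1,C,\preceq_C,D)$ is recoverable with only $O(1)$ ambiguity (which of the at most three boundary classes are present, plus the bit for $D$); this yields $O(\tpo(n))$ directly and, incidentally, also justifies the paper's unproven ``single SCC is the worst case'' assertion.
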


\begin{proof}
For both automata, the worst case occurs where all states are in the same SCC $C$, say $C = R$.
Starting with $\U$, each macrostate is a tuple $(Q_1, C\setminus Q_1, \preceq, C, D)$.
This means that if we cover the complete statespace, we map each preorder $\preceq'$ to up to $4$ macrostates, namely those with
$C = Q_1 = D$, $C = Q_1 \supsetneq D$, $C \supsetneq Q_1 = D$, and $C \supsetneq Q_1 \supsetneq D$. For $C \supsetneq Q_1$, we remove the minimal elements of $\preceq$ from $Q_1$ and use the set $\preceq = (Q_1 \times Q_1) \cap\preceq'$, while for $C=Q_1$, we use $\preceq = \preceq'$.
For $D \subsetneq Q_1$, $D$ contains all but the maximal (w.r.t.\ $\preceq$) states of $Q_1$.

As this covers all macrostates of $\mathcal U$, $\mathcal U$ has at most $4\cdot\tpo(n)$ macrostates.

For $\B_u$, there are $\bigO(n)$ possible choices for $D$, which leads to $\bigO(n\cdot\tpo(n))$ macrostates.
\end{proof}

\section{Discusion}
\label{sec:conclusion}
We can optimise the construction of $\U$ slightly by moving to transition-based acceptance conditions.
Essentially, where
$(Q_1',Q_2',\preceq',C,\emptyset) \in \trans_u\big((Q_1,Q_2,\preceq,C,D),\sigma\big)$,
$(Q_1',Q_2',\preceq',C,\emptyset)$ would be replaced by $\trans_u\big((Q_1,Q_2,\equiv,C,\emptyset),\sigma\big)$. ($\equiv$ identifies all states it compares; it is the only total preorder acceptable for $D=\emptyset$.)

This is done recursively, until the only macrostates with $D=\emptyset$ left are those with $Q_1 \cap R = \emptyset = Q_2 \cap A$ and (arbitrarily) $C=C_0$.
Note that the initial macrostate has to be changed for this, too.

Removing most macrostates with $D=\emptyset$, this reduces the statespace slightly.
It is also the automaton obtained by de-generalising the standard LTL to transition-based unambiguous generalized \buchi automaton construction.
We can also `re-generalise': every singleton SCC can be removed from the round-robin at the cost of including an individual B\"uchi condition that accepts when the state $s$ is not in $Q_1$ or $Q_2$, respectively, or if $Q_1 \models \trans(s,\sigma)$ or $Q_2 \models \dual{\trans}(s,\sigma)$, respectively, holds. If all components are singleton, we obtain the standard constuction for AVAs / LTL since the preorders of our construction given in Section~\ref{sec:improved-construction} can be omitted.
This way, the $D$ set in a macrostate degenerates to a purely breakpoint construction.
Then, the improved complexity for AVAs matches the current known bounds $n2^n$ for the LTL-to-UBA construction~\cite{DBLP:conf/lics/VardiW86,DBLP:journals/fmsd/JantschMBK21}.



\bibliography{sample}

\newpage
\appendix

\section{Proof of Lemma~\ref{lem:unique-distance}}
\label{app:unique-distance}
\uniqueDistance*

\begin{proof}
Intuitively, the distance function is to define a minimal number of steps to escape from rejecting SCCs over different accepting run DAGs and maximal over different branches of one such run DAG.
We first show that such a sequence of distance function exists and then prove that it is unique.

Let $C$ be a rejecting SCC of $\A$; the proof for the case for a rejecting SCC of $\dual{\A}$ is similar.
Below, we describe how to obtain a sequence of distance values for each state $q \in C \cap Q^i_1$ with $i \geq 0$ in order to form a consistent sequence $\Phi_w$.
For $q \in C \cap Q^i_1$ at the level $i$, we first obtain an accepting run DAG $\G_{\wordletter{w}{i\cdots}}$ over $\wordletter{w}{i\cdots}$ starting from $\vertex{q}{0}$.
One can define the maximal distance, say $K$, over all branches from $\vertex{q}{0}$ to escape the rejecting SCC $C$.
Such a maximal distance value must exist and be a finite value, since all branches will eventually get trapped in accepting SCCs.
For all accepting run DAGs $\G'_{\wordletter{w}{i\cdots}}$ over $\wordletter{w}{i\cdots}$ starting from the vertex $\vertex{q}{0}$, there are only finitely many run DAGs of depth $K$ from the vertex $\vertex{q}{0}$;
we denote the finite set of such run DAGs of depth $K$ by $P_{q,i}$.
We then denote the maximal distance over one finite run DAG $G_{q,i, K} \in P_{q,i}$ by $K_{G_{q,i, K}}$.
(Note that we set the distance to $\infty$ for a finite branch in $G_{q,i, K}$ if it does not visit a state outside $C$.)
We then set $d_i(q) = \min \setnocond{ K_{G_{q,i, K}}: G_{q,i, K} \in P_{q,i}} \leq K$.
One of $G_{q,i, K}$ must provide the \emph{minimal} value, so that $d_i(q)$ is well defined.
This way, we can define the sequence of distance functions $\textbf{d} = d_0 d_1 \cdots$ for the sequence $\R_w$.

We can show that the sequence $\R_w \times \textbf{d}$ is consistent by induction on all the distance value $k > 0$.
We only prove the case for a state $q \in R\cap Q^i_1$ that belongs to a rejecting SCC in $\A$.
The proof for a state $q \in A \setminus Q^i_1$ is similar.
We first prove the \textbf{induction basis} when $\mathbf{k = 1}$.

Let $q \in C \cap Q^i_1$ be a state with $d_i(q) = 1$.
By definition, we know that $K \geq 1$.
Moreover, there must be a run DAG $G_{q, i, K}$ of depth $K$ as part of an accepting run DAG $\G_{\wordletter{w}{i\cdots}}$ in which the level $1$ only contains the states in $S \subseteq Q \setminus C$ such that $S \models \trans(q, \wordletter{w}{i})$.
Since $G_{q, i, K}$ is part of an accepting run DAG over $\wordletter{w}{i\cdots}$, we also have that $\wordletter{w}{i+1\cdots} \in \lang{\A^s}$ for all $s \in S$.
Hence, $S \subseteq Q^{i+1}_1$, and further $S \subseteq Q^{i+1}_1 \setminus C$.
It immediately follows that $Q^{i+1}_1\setminus C \models \trans(q, \wordletter{w}{i})$, in compliance with the rules R1a and R1b.

Now we prove the \textbf{induction step} ($\mathbf{k\mapsto k+1}$).
Assume that $d_i(q) = k + 1$ and for all distance values $k' \leq k$, the distance function $\textbf{d} = d_0 d_1 \cdots$ is consistent.
Again, we should be able to find the run DAG $G_{q, i, k+1}$ of depth $k+1$ of $\G_{\wordletter{w}{i\cdots}}$ in which $S$ is the set of states in level $1$.
Obviously, $S \models \trans(q, \wordletter{w}{i})$.
Similarly, $S \subseteq Q^{i+1}_1$ holds.
For all states $p \in S \cap C$, we have that $d_{i+1}(p) \leq k = d_i(q) - 1$ (as witnessed by the run DAG $G_{q, i+1, k}$ over $\wordletter{w}{i\cdots}$ obtained from $G_{q, i, k+1}$ by removing level $0$).
Thus, we have $S\cap C \subseteq \setcond{p \in C \cap Q^{i+1}_1}{d_{i+1}(p) \leq d_i(q) - 1}$.
Together with the fact that $S\setminus C \subseteq Q^{i+1}_1 \setminus C$, we have that $ Q^{i+1}_1\setminus C \cup \setcond{p \in C \cap Q^{i+1}_1}{d_{i+1}(p) \leq d_i(q) - 1} \models \trans(q, \wordletter{w}{i})$, in compliance with R1a.

We can prove R1b easily by contraposition.
Assume that $Q^{i+1}_1\setminus C \cup \setcond{p \in C\cap Q^{i+1}_1}{d_{i+1}(p) \leq d_{i}(q) - 2} \models \trans(q, \wordletter{w}{i})$.
Then, there exists a run DAG $G'_{q,i,K}$ in which the level $1$ contains all the states in $Q^{i+1}\setminus C \cup \setcond{p \in C\cap Q^{i+1}_1}{d_{i+1}(p) \leq d_{i}(q) - 2} $.
Since $\textbf{d}$ is consistent when the distance value is not greater than $k$, so $K'_{G_{q, i, K}} \leq k$ by induction hypothesis.
Thus, by definition, we should have $d_i(q) = k$, leading to contradiction.

Therefore, $\R_w \times \textbf{d}$ is a consistent sequence.

Now we prove that the distance function $\textbf{d}$ is unique to $\R_w$.
We observe that a consistent sequence $ \textbf{c} = c_0 c_i \ldots$ will provide an accepting run DAG for all tails $\wordletter{w}{i\cdots}$ with $i \in \naturals$: by always choosing the satisfying sets from R1a and R2a, respectively, from a state $q$ in the domain of $c_i$, we will leave its SCC $C$ from level $i$ of the run DAG in $c_i(q)$ steps, so that no run can get stuck in a rejecting SCC.

This also provides $c_i(q)\geq d_i(q)$ for all $i\in \naturals$ and all $q$ in their domain, by definition of $\textbf{d}$.

We now show by induction that, for all $k>0$ and all $i \in \naturals$, the pre-image of $c_i$ and $d_i$ for $k$ coincide.
\medskip

The \textbf{induction basis} is the case of $k=1$, and thus $d_i(q)=1$.
For this to happen, it requires that $C$ can be left immediately, which would then allow for using rule R1b or R2b, as the left set of the union alone suffices for satisfaction.
$c_i(q)=1$ is therefore the only possible assignment (and in compliance with rules R1a and R1b).
\medskip

The \textbf{induction step} is from $k$ to $k+1$.

Let $d_i(q)=k+1$. We have already shown $c_i(q) \geq k+1$.

By definition, there is an accepting run DAG from $q$ at level $i$, such that $C$ is left in $k+1$ steps.
We fix such a run DAG. We observe that for every successor $s$ in level $i+1$ we have that it is either outside of $C$, or $d_{i+1}(s)\leq k$.
Using the induction hypothesis, the latter entails $c_{i+1}(s) \leq k$.
Therefore, rule R1a or R2a applies.
We now assume for contradiction that the respective rule R1b or R2b does not apply.
But then we can satisfy $\trans(p,\wordletter{w}{i})$ or $\dual{\trans}(p,\wordletter{w}{i})$, respectively, by only those states not in $C$ or with $d_{i+1}(s)<k$, which would entail $d_i(q)\leq k$ (by making such a choice and inserting the witnessing run graphs in level $i+1$.
This closes the contradiction, and provides $c_i(q)=k+1$.

This completes the induction and provides the Lemma.
\end{proof}

\section{Proof of Theorem~\ref{thm:correctness-new-construction}}
\label{app:new-construction-correctness}
\newConstruction*

\begin{proof}
We first observe that, for an accepting macrorun
$\rho = (Q_1^0,Q_2^0,\{\preceq_C^0\}_{C \in \mathcal S},S^0,D^0)\linebreak
(Q_1^1,Q_2^1,\{\preceq_C^1\}_{C \in \mathcal S},S^1,D^1)
(Q_1^2,Q_2^2,\{\preceq_C^2\}_{C \in \mathcal S},S^2,D^2)\cdots$ on a word $w$ we have that

\begin{enumerate}
    \item $q \in Q_1^i$ implies $w[i\cdots] \in \lang{\A^q}$ and
    \item $q \in Q_2^i$ implies $w[i\cdots] \in \lang{\widehat{\A}^q}$.
\end{enumerate}

To show (1), we observe that we can produce a run DAG $\G_{q,i}$ of $\A^q$ on $w[i\cdots]$ by selecting, for all $j \geq i$ and all $q' \in Q_1^j \cap C$ for some rejecting SCC $C \subseteq R$ only successors for the minimal (w.r.t.\ $\preceq_C^{j+1}$) downward closed set $D' \subseteq C \cap Q_1^{j+1}$ such that $D' \cup (Q_1^{j+1}\setminus C) \models \trans(q',w[j])$.

We now show that all the branches in the run DAG cannot get stuck in $C$.
As the macrorun $\rho$ is accepting, there must be a next time $k > j$ where either $Q^k_1 \cap C =\emptyset$ (which trivially means that the branches do not get stuck in $C$) or $D^k = Q^k_1 \cap C$---either happens at the latest after $|\mathcal S|$ accepting macrostates have been visited.
Recall that in the construction, when $D^{k-1} = \emptyset$ (and thus a visit to accepting macrostate) and $C = S^{k} = \nxt(S^{k-1})$, we have that $D^k = Q^k_1 \cap C$ as $C \subseteq R$ is a non-accepting SCC.
For the latter case, it is to show by induction that all branches in the run DAG originating from $q$ are henceforth either not in $C$, or in $D$, so that $C$ is left at the latest when the $|\mathcal S|+1^{st}$ accepting macrostate is visited.
The reason why the branches are in $D$ is that according to the construction, we only leave the smallest downward closed set of successors for $D$ in $D'$.
Since $\rho$ visits empty $D$-sets for infinitely many times, the run DAG must not be stuck in $C$ for all $C \in \mathcal{S}$.
Therefore, the run DAG $\G_{q,i}$ is accepting.
It follows that $\wordletter{w}{i\cdots} \in \lang{\A^q}$.

The proof for (2) is similar.

Using this, we first obtain $\lang{\B_u} \subseteq \lang{\A}$ (as an accepting macrorun must satisfy $\iota \in Q_1^0$).

Second, it implies that $\R_w = Q_1^0Q_1^1Q_1^2\cdots$ holds for all accepting macroruns.

With Corollary \ref{cor:unique-preorder} and the observation that the update of the last two components of a macrostate ($S$ and $D$) are deterministic, this entails unambiguity (there is at most one accepting macrorun).
Note that any wrong guesses for preorders will violate the local consistency rules and those macroruns will therefore discontinue the moment violations are found.

Finally, if $w\in \lang{\A}$, then we have for the unique sequence $\R_w = Q_1^0Q_1^1Q_1^2\cdots$ that $\iota \in Q_1^0$, and we can use Lemma \ref{lem:unique-distance} to construct the corresponding unique distance functions $d_0d_1\ldots$.

Now we show how to construct an accepting macrorun $\rho = (Q_1^0,Q_2^0,\{\preceq_C^0\}_{C \in \mathcal S},S^0,D^0)\linebreak
(Q_1^1,Q_2^1,\{\preceq_C^1\}_{C \in \mathcal S},S^1,D^1)
(Q_1^2,Q_2^2,\{\preceq_C^2\}_{C \in \mathcal S},S^2,D^2)\cdots$ of $\B_u$ over $w$ where $Q^i_2 = Q\setminus Q^i_1$ and $Q^0_1 Q^1_1 Q^2_1\cdots$ is of course the unique sequence $\R_w$;
We will set the preorders $\setnocond{\preceq^i_C}_{C \in \mathcal{S}}$ as defined in Lemma~\ref{lem:exist-preorders}, based on the distance functions $d_0 d_1 \cdots$.
The updates of $S^i$ and $D^i$ are then deterministic with respect to $Q^i_1$ and $\setnocond{\preceq^i_C}_{C \in \mathcal{S}}$.
Apparently, the preorders meet all the local consistency constraints, according to Lemma~\ref{lem:exist-preorders}.
So, the macrorun $\rho$ is of infinite length.

By Lemma~\ref{lem:one-one-mapping} and Corollary~\ref{cor:unique-preorder}, we also know that it is the unique preorder sequence for $\R_w$, which gives $q \preceq^i_C q' \Longleftrightarrow d_i(q) \leq d_i(q')$ and $q \prec^i_C q' \Longleftrightarrow d_i(q) < d_i(q')$ for all $i \in \naturals$.
With this, by Definition~\ref{def:consistent-distances} and Definition~\ref{def:consistent-preorders}, it is now easy to show with an inductive argument similar to the one in Lemma \ref{lem:one-one-mapping} that, if $D^i \neq \emptyset$, $\sup\{d_i(q) \mid q \in D^i\} = \sup\{d_{i+1}(q) \mid q \in D^{i+1}\} +1$ (choosing $\sup \emptyset = 0$).
Since all the distance values of the states in $D^i \neq \emptyset$ are finite and the maximal value in $D^i$ is decreasing, the value will eventually become $0$.
In other words, for every $i > 0$ with $D^i \neq \emptyset$, there will be some $j > i$ such that $D^{j} = \emptyset$.
Thus, the macrorun $\rho$ must be accepting.

We therefore also have $\lang{\A}\subseteq \lang{\B_u}$.
\end{proof}

\section{Proof of Theorem~\ref{thm:correctness-improved-construction}}
\label{app:improved-construction}
\begin{proof}

We first observe that, for an accepting macrorun
$\rho = (Q_1^0,Q_2^0,\preceq_{C_0}^0,C^0,D^0)\linebreak
(Q_1^1,Q_2^1,\preceq_{C_1}^1,C^1,D^1)
(Q_1^2,Q_2^2,\preceq_{C_2}^2,C^2,D^2)\cdots$ on a word $w$ we have that

\begin{enumerate}
    \item $q \in Q_1^i$ implies $w[i\cdots] \in \A^q$ and
    \item $q \in Q_2^i$ implies $w[i\cdots] \in \widehat{\A}^q$,
\end{enumerate}
with exactly the same proof as in Theorem \ref{thm:correctness-new-construction}.

Using this, we first similarly obtain $\lang{\U} \subseteq \lang{\A}$ (as an accepting macrorun must satisfy $\iota \in Q_1^0$) and that $\R_w = Q_1^0Q_1^1Q_1^2\cdots$ holds for all accepting macroruns.

Next we show that $\U$ can simulate $\B_u$:
if
$\rho = (Q_1^0,Q_2^0,\{\preceq_C^0\}_{C \in \mathcal S},S^0,D^0)
(Q_1^1,Q_2^1,\{\preceq_C^1\}_{C \in \mathcal S},S^1,$ $D^1)
(Q_1^2,Q_2^2,\{\preceq_C^2\}_{C \in \mathcal S},S^2,D^2)\cdots$ is an accepting macrorun of $\B_u$ on a word $w$, then $\U$ has an accepting macrorun
$\dual{\rho} = (Q_1^0,Q_2^0,\preceq_0,S^0,D^0)
(Q_1^1,Q_2^1,\preceq_1,S^1,D^1)
(Q_1^2,Q_2^2,\preceq_2,S^2,D^2)\cdots$, where
\begin{itemize}
    \item if $S^i \subseteq R$, then $\preceq_i$ is a total preorder on $S^i \cap Q_1^i$ such that
    \begin{itemize}
        \item $\preceq_i = \preceq_{S^i}^i$ if $D^i=S^i \cap Q_1^i$ and
        \item otherwise, the maximal elements of $\preceq_i$ are set to $(S^i \cap Q_1^i) \setminus D^i$, and the restriction of $\preceq_i$ to $D^i \times D^i$ agrees with the restriction of $\preceq_{S^i}^i$ to $D^i \times D^i$, and
    \end{itemize}
    \item similarly, if $S^i \subseteq A$, then $\preceq_i$ is a total preorder on $S^i \cap Q_2^i$ such that
    \begin{itemize}
        \item $\preceq_i = \preceq_{S^i}^i$ if $D^i=S^i \cap Q_2^i$ and
        \item otherwise, the maximal elements of $\preceq_i$ are set to $(S^i \cap Q_2^i) \setminus D^i$, and the restriction of $\preceq_i$ to $D^i \times D^i$ agrees with the restriction of $\preceq_{S^i}^i$ to $D^i \times D^i$.
    \end{itemize}
\end{itemize}

Let $m_i = \sup\{d_i(q) \mid q \in D^i\}$.
Intuitively, the total preorder $\preceq_i$ simply orders those states in $s \in S^i \cap Q^i_1$ resp.\ $s \in S^i\cap Q^i_2$ with $d_i(s)\leq m_i$ correctly, while aggregating all such states $s$ with $d_i(s) > m_i$
as maximal elements.
It is easy to extend the proof of Theorem \ref{theo:Bu} to show that this satisfies all local constraints for Rule R1' resp.\ R2'.
Note that our preorders $\preceq_i$ are no longer defined over all SCCs, so Lemma~\ref{lem:one-one-mapping} may not entirely hold here.
Now we show that $(Q^{i+1}_1, Q^{i+1}_2, \preceq_{i+1}, S^{i+1}, D^{i+1})$ is a valid $\wordletter{w}{i}$-successor of $(Q^{i}_1, Q^{i}_2, \preceq_{i}, S^{i}, D^{i})$.
First, the local consistency for the reachable states $Q^{i+1}_1$ and $Q^{i+1}_2$ clearly holds since $\rho$ also visits the same set of reachable states.
If $D^i = \emptyset$, two constructions behave the same.
So we only need to show it is valid when $D^i \neq \emptyset$.
We next show that the requirements of Rule R1' are met;
the proof for Rule R2' is similar.
If $D^i = S^i \cap Q^i_1$, then $D^{i+1} $ is the smallest downward closed set w.r.t. $\preceq_{i+1}$ such that $D^{i+1} \cup (Q^{i+1}_1 \setminus S^{i + 1}) \models \land_{s \in D^i}\trans(s, \sigma)$.
If $D^{i+1} = S^{i+1} \cap Q^{i+1}_1$, then $\preceq_{i+1} = \preceq^{i+1}_{S^i}$, the consistency clearly holds.
If $D^{i+1} \subset S^{i+1} \cap Q^{i+1}_1$, then, for every pair of states $q, q'$ with $q \preceq^i_{S^i} q'$, there must be a state $r \in D^{i+1}$ satisfying Definition~\ref{def:consistent-preorders}, since $D^{i+1} \cup (Q^{i+1}_1 \setminus S^{i + 1}) \models \land_{s \in D^i}\trans(s, \sigma)$ and $\preceq_{i+1} = \preceq^{i+1}_{S^{i}}$ over $D^{i+1} \times D^{i+1}$, where $S^{i+1} = S^i$.
If $D^i \neq S^i \cap Q^i_1$, then we have $D^i \subset S^i \cap Q^i$.
For states $q, q' \in D^i$ with $q \preceq_i q'$, the proof is similar.
Consider $q \in D^i, q' \in (S^i \cap Q^i_1)\setminus D^i$ with $q \prec_i q'$:
it is impossible that $D^{i+1} = S^{i+1} \cap Q^{i+1}_1$.
This is because that since $\rho[i]$ and $\rho[i+1]$ are consistent sequence, they $D^i$ will include all states from $S^i \cap Q^i_1$, violating the assumption and Definition~\ref{def:consistent-preorders}.
So, it must be the case that $D^{i+1} \subset S^{i+1} \cap Q^{i+1}_1$.
Then we can just select the $r$-state of Definition~\ref{def:consistent-preorders} as a minimal element $(S^{i+1} \cap Q^{i+1}_1)\setminus D^{i+1}$, satisfying R1' in Definition~\ref{def:consistent-preorders}.
Hence, the macrorun $\dual{\rho}$ is infinite and visits infinitely many empty $D$-sets.

This provides $\lang{\U}\supseteq\lang{\B_u}$.
With $\lang{\B_u}=\lang{\A}$ (Theorem \ref{theo:Bu}), we now have $\lang{\U}=\lang{\A}$.

To show that there is only one accepting macrorun, we turn the argument of assigning values around.
Our proof idea is to establish some properties of every accepting macrorun in $\U$ and prove that there is only one macrorun satisfying such properties.

We have already established that  $\R_w = Q_1^0Q_1^1Q_1^2\cdots$ holds, and will use the unique extension  $\Phi_w = (Q_1^0, d_0) (Q_1^1, d_1)\cdots$ to distance functions (Lemma \ref{lem:unique-distance}).

Let $\dual{\rho} = (Q_1^0,Q_2^0,\preceq_0,C^0,D^0)
(Q_1^1,Q_2^1,\preceq_1,C^1,D^1)
(Q_1^2,Q_2^2,\preceq_2,C^2,D^2)\cdots$ be an accepting macrorun of $\mathcal U$ on a word $w$.
Let $i>0$ be an accepting macrostate position in $\dual{\rho}$, and let $i'<i$ be the last accepting macrostate position that occurred before $i$.

We assume $C^i \subseteq R$, the case $C^i \subseteq A$ is entirely similar.
We note that $C^j = C^i$ for all $i' < j \leq i$.
Hence, in the following, we actually work on the SCC $C^i$.

We now show by induction over $j$ that, for all $i'<j\leq i$
we have that, for $m_j = \sup\{d_j(q) \mid q \in D^j\}$, the total preorder $\preceq_j$ simply orders those states in $s \in C^i \cap Q_1^j$ correctly, while the remaining states are maximal elements of $\preceq_j$:
\begin{enumerate}
    \item for all $q \in D^j$ and all $ q' \in C^i \cap Q_1^j.\ d_j(q)\leq d_j(q') \Leftrightarrow q \preceq_j q'$, and $d_j(q)\leq i-j$ hold,
    \item for all $q \in C^i \cap Q_1^j$ and all $ q' \in (C^i \cap Q_1^j) \setminus D^j.\  q \preceq_j q'$ and $d_j(q') > i-j$ hold, and
    \item $m_j = \sup\{d_j(q) \mid q \in D^j\} = i-j$, using $\sup\emptyset = 0$.
\end{enumerate}

For the \textbf{induction basis}, this is true by definition for $j=i$.
By assumption, $D^j = \emptyset$ and by Definition~\ref{def:separated-ba}, $\preceq_j$ identifies only one equivalence class---the maximal equivalence class, since $D^j$ is the smallest downward closed set w.r.t. $\preceq_j$ such that $D^j \cup (Q^{j} \setminus C^i) \models \land_{s\in D^{j-1}} \trans(s, \wordletter{w}{i})$.
By definition, either $D^j = Q^j \cap C^i$ or $D^j = (Q^j \cap C^i)\setminus M^j$ holds where $M^j$ is the maximal elements of $\preceq_j$.
Hence, for all $q \in C^i\cap Q^j_1, q' \in (C^i\cap Q^j_1)\setminus D^j = C^i\cap Q^j_1$, $q \simeq_j q'$ and thus also $q \preceq_j q'$;
By definition, $d_j(q') > 0$ holds always.
Therefore, Item (2) holds.
Moreover, Item (3) clearly holds.
For Item (1), since $q$ does not exist, we simply say Item (1) is true for technical reason.

For the \textbf{induction step} $j \mapsto j-1$ (assuming $j>i'+1$),
Rules R1 and R1' imply with (1) and (2) from the induction hypothesis that (1) and (2) also hold for $j-1$.

For $j=i $, and thus $D^{j} = \emptyset$, by Rule R1 it is exactly those states with $d_{j-1}(q) = 1$ that are in $D^{j-1}$.
Assume that $D^{j-1} \neq \emptyset$.
Clearly, for all $q \in D^{j-1}$, $d_{j-1}(q) = 1 \leq i -(j-1) = 1$.
Hence, Item (3) holds.
For $q' \in D^{j-1} \subseteq C^i \cap Q^{j-1}_1$, we have $d_{j-1}(q) = d_{j-1}(q')$ hold.
Further, $q \simeq_{j-1} q'$ holds since if there is $q'$ such that $q \prec_j q'$, then the local consistency of Definition~\ref{def:consistent-preorders} will be violated because $\emptyset \cup (Q^j_1 \setminus C^i ) \models \trans(q', \wordletter{w}{j-1})$.
If $q' \in (C^i \cap Q^{j-1}_1) \setminus D^{j-1}$, by definition, $q \prec_{j-1} q'$ and thus also $q \preceq_{j-1} q'$ since $q'$ is a maximal element w.r.t.\ $\preceq_{j-1}$.
Moreover, by R1', there exists an $r$-state for $q$ and $q'$ satisfying Definition~\ref{def:consistent-preorders}.
If $d_{j-1}(q') \leq i - (j - 1) = 1$, i.e., $d_{j-1}(q') = 1 = d_{j-1}(q)$, this violates the existence of $r$-state in Definition~\ref{def:consistent-preorders}, according to Definition~\ref{def:consistent-distances}.
Thus, $d_{j-1}(q') > i - (j-1) = 1$.
It follows that Item (2) holds.
Now we only need to prove that $d_{j-1}(q) \leq d_{j-1}(q') \Leftrightarrow q \preceq_{j-1} q'$ and $d_{j-1}(q) \leq i - (j-1)$ hold when $q' \in (C^i \cap Q^{j-1}_1) \setminus D^{j-1}$.
The case when $q' \in  D^{j-1}$ has already been proved above.
By Item (2), we already have $d_{j-1}(q) < d_{j-1}(q), q \prec_{j-1} q'$ and clearly, $d_{j-1}(q) \leq 1$ hold.
Hence, Item (1) holds as well.
It follows that when $j = i$, the three items also hold when $j \mapsto j -1$.
If there are no such states, i.e. $D^{j-1} = \emptyset$, then the backwards deterministic definition of $(Q_1^{j-1},\preceq_{j-1})$ from $(Q_1^{j},\preceq_{j})$ according to Rule R1'
implies\footnote{note that $D^j = \emptyset$, so that all states in $C^i \cap Q^j$ are maximal elements of, and therefore identified by $\preceq_j$}
(with the absence states $d_{j-1}(q) = 1$ and Rule R1) that all states in $C^i \cap Q_1^{j-1}$ are identified by $\preceq_{j-1}$ and $D^{j-1}=\emptyset$.
Such a macrostate is accepting, which contradicts $j>i'+1$.

For $j<i$, we first observe that, for all states $q \in C^i \cap Q^{j-1}_1$, $d_{j-1}(q)\leq i-j+1$ holds iff $q \in D^{j-1}$ with the same backwards deterministic argument as above (using Rules R1 and R1') from the induction hypothesis.
But we also have to establish that there is a state $q \in D^{j-1} \subseteq C^i \cap Q^{j-1}_1$ with $d_{j-1}(q) = i-j+1$.

We assume for contradiction that this is not the case. Then $\sup\{d_{j-1}(q) \mid q \in D^{j-1}\} \leq i-j$, which implies that the set $\dual{D}^j = \{q \in C^i \cap Q^j_1 \mid d_j(q) < i-j\}$ is a downwards closed (w.r.t.\ $\preceq_j$) set, which is strictly smaller than $D^j$ and satisfies the other transition requirements.
Therefore $D^j$ does not satisfy the minimality requirement (contradiction).

The proof for the three items are then easy.
First, Item (3) has been proved above.
By induction hypothesis, we have that the three items hold on position $j$.
We now prove Item (2).
For all $q \in C^i \cap Q^{j-1}_1$ and $q' \in M^{j-1} = (C^i \cap Q^{j-1}_1)\setminus D^{j-1}$ (if it exists), by definition $q'$ is a maximal element w.r.t.\ $\preceq_{j-1}$.
Clearly, $q \prec_{j-1} q'$ and thus $q \preceq_{j-1} q'$.
Suppose $d_{j-1}(q') \leq i- j + 1$.
But then we have a state $q \in D^{j-1}$ such that $d_{j-1}(q) = i - j + 1$.
Since $q \prec_{j-1} q'$, there must exist an $r$-state in $C^i \cap Q^j_1$ satisfying R1' of Definition~\ref{def:consistent-preorders}.
By Definition~\ref{def:consistent-distances}, $d_j(r) \leq i-j$.
We then have that $r \in D^j$ because there is a state $r' \in D^j$ such that $d_j(r') = i-j$ and then we have $r \preceq_{j} r'$ by Item (1).
(If $D^j = \emptyset$, it immediately leads to contradiction.)
But then, $\setcond{p \in C^i \cap Q^j}{ p \prec_j r} \cup (Q^j_1\setminus C^i) \not\models \trans(q, \wordletter{w}{j-1})$ since $\setcond{p \in C^i \cap Q^j}{ p \prec_j r} = \setcond{p \in C^i \cap Q^j}{ d_j(p) \leq d_j( r) - 1 < i - j = d_{j-1}(q) - 1}$ (by induction hypothesis), violating Definition~\ref{def:consistent-distances}.
It follows that $d_{j-1}(q') > i - j + 1$.
Therefore, Item (2) holds.
Item (1) can be proven similarly.
One can also prove similarly the following when $C^i \subseteq A$:
\begin{enumerate}
    \item for all $q \in D^j$ and all $ q' \in C^i \cap Q_2^j.\ d_j(q)\leq d_j(q') \Leftrightarrow q \preceq_j q'$, and $d_j(q)\leq i-j$ hold,
    \item for all $q \in C^i \cap Q_2^j$ and all $ q' \in (C^i \cap Q_2^j) \setminus D^j.\  q \preceq_j q'$ and $d_j(q') > i-j$ hold, and
    \item $m_j = \sup\{d_j(q) \mid q \in D^j\} = i-j$, using $\sup\emptyset = 0$.
\end{enumerate}

This closes the inductive argument.
\medskip

Finally, we observe that the simulation macrorun $\dual{\rho}$ for the sole accepting macrorun of $\B_u$ is the only macrorun that satisfies the three item requirements, based on following facts.
(i) There is only a single initial macrostate $(Q^0_1, Q^2_2, \preceq_0, C^0, D^0)$ that fits $\R_w$ (with all states in $C_0 \cap Q^0_1$ or ($C_0 \cap Q^0_2$) being maximal w.r.t. $\preceq^0_C$ since $D^0 = \emptyset$), and when we take a transition from an accepting macrostate $(Q^j_1, Q^j_2, \preceq_j, C^j, D^j = \emptyset)$ (including the first), the next SCC $C^{j+1} = \nxt(C^j)$ is deterministically selected.
(ii) Moreover, all relevant states from the SCC $C^{j+1} \cap Q^{j+1}_1$ (resp. $C^{j+1}\cap Q^{j+1}_2$) are in the $D^{j+1}$ component, since $D^{j+1} = C^{j+1}\cap Q^{j+1}_1$ (resp. $D^{j+1} = C^{j+1}\cap Q^{j+1}_2$) by construction.
We have seen that $\sup \setcond{d_j(q)}{q \in C^j\cap Q^j_1}$ (resp. $\sup \setcond{d_j(q)}{q \in C^j\cap Q^j_2}$) determines the distance to the next breakpoint, by Item (3) when $D^{j-1} = \emptyset$ for all $j > 0$, and thus the $\preceq_j$ and $D^j$ up to the next breakpoint.
Wrong guesses of the preorders for the states in $D$-component and the states in $M$ will lead to violation to R1' and R2' in the local consistency test.
With a simple inductive argument we can thus conclude that there can only be one such accepting macrorun.
\end{proof}

\end{document}